\title{Pumping Lemmata for Recognizable Weighted Languages \\
  over \textsc{Artinian} Semirings}
\author{Andreas Maletti
  \institute{Universit\"at Leipzig \\
    Faculty of Mathematics and Computer Science \\
    PO~Box~100\,920, 04009~Leipzig, Germany}
  \email{maletti@informatik.uni-leipzig.de}
  \and
  Nils Oskar Nuernbergk
  \email{nils.nuernbergk@gmail.com}
}
\theoremstyle{plain}
\newtheorem{thm}{Theorem}[section]
\newtheorem{lem}[thm]{Lemma}
\newtheorem*{cor}{Corollary}
\theoremstyle{definition}
\newtheorem{defnx}[thm]{Definition}
\newtheorem{exmpx}[thm]{Example}
\newenvironment{defn} 
{%
\pushQED{\qed}\begin{defnx}}
{\popQED\end{defnx} %
}
\newenvironment{exmp} 
{%
\pushQED{\qed}\begin{exmpx}}
{\popQED\end{exmpx}
}
\let\leq=\leqslant
\let\geq=\geqslant
\DeclarePairedDelimiter{\abs}{\lvert}{\rvert}
\DeclarePairedDelimiter{\braces}{\lbrace}{\rbrace}
\DeclarePairedDelimiter{\angles}{\langle}{\rangle}
\DeclareMathOperator{\Ker}{Ker}
\newcommand{\NN}{\mathbb{N}_{\scriptscriptstyle +}}
\newcommand{\N}{\mathbb N}
\newcommand{\ZZ}{\mathbb{Z}}
\newcommand{\QQ}{\mathbb{Q}}
\newcommand{\BB}{\mathbb{B}}
\providecommand{\seq}[3]{\ensuremath{#1_{#2}, \dotsc, #1_{#3}}}
\newcommand{\inn}{\operatorname{in}}
\newcommand{\out}{\operatorname{out}}
\newcommand{\supp}{\operatorname{supp}}
\newcommand{\im}{\operatorname{im}}
\newcommand{\Qmax}{\QQ^{\max}}
\newcommand{\End}{\operatorname{End}}
\newcommand{\Hom}{\operatorname{Hom}}
\newcommand{\id}{\operatorname{id}}
\newcommand{\0}{\braces{0}}
\newcommand\restrict[1]{\raisebox{-.3ex}{$|$}_{#1}}
\begin{document}
\maketitle

\begin{abstract}
  Pumping lemmata are the main tool to prove that a certain language
  does not belong to a class of languages like the recognizable
  languages or the context-free languages.  Essentially two pumping
  lemmata exist for the recognizable weighted languages: the classical
  one for the \textsc{Boolean} semiring (i.e., the unweighted case),
  which can be generalized to zero-sum free semirings, and the one for
  fields.  A joint generalization of these two pumping lemmata is
  provided that applies to all \textsc{Artinian} semirings, over which
  all finitely generated semimodules have a finite bound on the length
  of chains of strictly increasing subsemimodules.  Since
  \textsc{Artinian} rings are exactly those that satisfy the
  Descending Chain Condition, the \textsc{Artinian} semirings include
  all fields and naturally also all finite semirings (like the
  \textsc{Boolean} semiring).  The new pumping lemma thus covers most
  previously known pumping lemmata for recognizable weighted
  languages.
\end{abstract}

\section{Introduction}
The class of recognizable languages~\cite{yu97} is certainly the
best-studied and one of the most useful classes of languages.  It has
excellent closure properties, and all standard decision problems for it
are decidable.  Applications of the recognizable languages are too
numerous to list, but include pattern
matching~\cite[Chapter~10]{focs}, lexical analysis~\cite{ahoull85},
input validation~\cite{loojon21}, network protocols~\cite{icc}, and
DNA sequence analysis~\cite{bio}.  Pumping lemmata are statements of the
form that given a suitably long word in the language, we can always
identify a subword that can be iterated (or pumped) at will without
leaving the language.  Such statements exist for many language
classes including the recognizable~\cite{yu97} and context-free
languages~\cite{autberboa97}, and they allow a relatively
straightforward proof that a given language does not belong to the
class (e.g., is not recognizable).

In several applications~\cite{albjar09,baigrocie09,knimay09}, the
purely qualitative yes/no"~decision of languages is completely
insufficient.  This led to the introduction of weighted
languages~\cite{sch61} (see~\cite{sak09} for an excellent survey), in
which each word is assigned a weight from a
semiring~\cite{hebwei98,gol99}.  The classical recognizable languages
are reobtained by considering the support of the recognizable weighted
languages over the \textsc{Boolean} semiring~$(\{0,1\},
\mathord{\max}, \mathord{\min}, 0, 1)$.  The theory of recognizable
weighted languages is also very well developed and several
textbooks~\cite{salsoi78,kuisal86,droste2009handbook} provide
excellent introductions.

Determining whether a given weighted language is recognizable is often
even more difficult than in the unweighted case, and we again mostly
rely on pumping lemmata~\cite{JAC80,REUT80} to prove that a given
weighted language is not recognizable.  However, the coverage
situation is very unsatisfactory.  The classical pumping lemma for
unweighted languages can be lifted to all zero-sum free
semirings~\cite{hebwei98,gol99} (i.e., semirings in which~$a + b = 0$
implies~$a = 0 = b$) by means of a semiring homomorphism from such a
semiring into the \textsc{Boolean} semiring~\cite{wan97} and a
construction that avoids zero-divisors~\cite{kir11}.  On the other
hand, the pumping lemmata of~\cite{JAC80,REUT80} require the semiring
to be a field, which necessarily is not zero-sum free.  Despite their
similarities, the two recalled pumping lemmata thus apply to completely
disjoint sets of semirings, which do not even cover all semirings
(e.g., the finite ring~$\mathbb Z_4$ is not zero-sum free and not a
field).  Indeed it is well-known~\cite{drokus21} how to handle finite
semirings like~$\mathbb Z_4$ (by encoding the weights into the
states), so that the classical unweighted pumping lemma becomes
applicable.  Similarly, it is known how to handle semirings
like~$\mathbb Z$ that embed into a field, but there are also infinite
semirings that are not zero-sum free and not (embeddable into) a field
like the ring~$\QQ[x]/(x^2)$ of rational linear polynomials.  The
ring~$\QQ[x]/(x^2)$ cannot embed into a field since it has zero-divisors
(e.g.,~$x \cdot x = 0$), but it fulfills the requirements for our
pumping lemma.  Hence there are semirings for which we currently have
no available pumping lemma, as well as different semirings that permit
essentially the same pumping lemma for their recognizable weighted
languages but with totally different justifications.

Let us recall the statement of these pumping lemmata.  Let~$L \colon
\Sigma^* \to S$ be a recognizable weighted language, which assigns to
each word~$w \in \Sigma^*$ a weight~$L(w) \in S$ in the semiring~$S$.
The support of~$L$ is the set~$\supp L = \{w \in \Sigma^* \mid L(w)
\neq 0\}$ of nonzero-weighted words in $L$.  The pumping lemma states
that given a sufficiently long word~$w \in \supp L$, there exists a
decomposition~$w = uxv$ such that~$ux^kv \in \supp L$ for
infinitely many~$k \in \N$.  In other words, $ux^kv$~is also
nonzero-weighted in~$L$ for infinitely many~$k \in \N$, where~$ux^kv =
ux\dotsm xv$ with $k$~repetitions of~$x$.

In this contribution we will establish such a pumping lemma for a
class of semirings that includes all fields and all finite semirings.
Thus, we directly cover both the pumping lemmata
of~\cite{JAC80,REUT80} as well as the classical pumping
lemma~\cite[Lemma~2]{PUMP}.  We achieve this by following the general
approach of~\cite{REUT80} while trying to avoid the vector space
structure utilized there.  This requires some minor adjustments and,
in particular, a replacement for the dimension, for which we use the
length of a semimodule.  A semimodule has finite length if there is a
finite bound on the length of strictly increasing chains of
subsemimodules.  This notion also allows us to define the
\textsc{Artinian} semirings that we consider.  A semiring is
\textsc{Artinian} if each finitely generated semimodule has finite
length.  The \textsc{Artinian} semirings include all fields and all
finite semirings, but not all zero-sum free semirings.  However, the
mentioned approach for zero-sum free semirings (applying the
homomorphism into the \textsc{Boolean} semiring and avoiding
zero-divisors) naturally also works with our pumping lemma.

We first show that any endomorphism of a semimodule over an
\textsc{Artinian} semiring is surjective if and only if it is
injective, which is a generalization of a well-known statement for
vector spaces.  Following the approach of~\cite{REUT80}, we introduce
pseudoregular endomorphisms using 2~of the 5~characterizing properties
utilized in~\cite[Proposition~1]{REUT80}.  Fortunately, these are the
two main properties needed for the proof of our pumping lemma, and the
remaining 3~properties rely on infrastructure that is not generally
available in our semimodules (instead of the vector spaces used
in~\cite{REUT80}).  The argument that a sufficiently long composition
of endomorphisms needs to contain a pseudoregular endomorphism can be
taken over mostly unchanged from~\cite{JAC80}, which then almost
directly yields our main pumping lemma.  Finally, we also briefly
consider pumping lemmata for infinite alphabets.

\section{Preliminaries}
\label{sec:prelim}
We denote the non-negative integers by~$\N$ and the positive integers
by~$\NN = \N \setminus \0$.  Moreover, we let~$\QQ^{\geq 0} =
\braces{q \in \QQ \mid q \geq 0}$ be the set of non-negative rational
numbers.  For every alphabet~$\Sigma$ we denote the free monoid
over~$\Sigma$ by~$\Sigma^*$, i.e., $\Sigma^*$~is the set of all finite
words with letters in~$\Sigma$.  We write~$\varepsilon$ for the empty
word (the neutral element of the free monoid).  Additionally, we
let~$\Sigma^{\scriptscriptstyle+} = \Sigma^* \setminus
\braces{\varepsilon}$.  For all sets~$A$, $B$, and~$C$ and
all maps $f \colon A \to B$~and~$g \colon B \to C$, we
let~$\mathord{\id_A} = \braces{(a, a) \mid a \in A}$ and~$(gf) \colon
A \to C$ be the map such that~$(gf)(a) = g(f(a))$ for every~$a \in
A$.  Finally, if~$A = B$, then we let~$f^0 = \mathord{\id_A}$
and~$f^{k+1} = ff^k$ for every~$k \in \N$.

A \emph{(commutative) semiring}~\cite{hebwei98,gol99} is an algebraic
structure~$(S, \mathord+, \mathord\cdot, 0, 1)$, in which $S$~is a
set, called \emph{carrier}, $(S, \mathord+, 0)$~and~$(S,
\mathord\cdot, 1)$ are commutative monoids, called \emph{additive} and
\emph{multiplicative monoid} respectively, and 
\begin{align*}
  r \cdot (s + t)
  &= (r \cdot s) + (r \cdot t), \tag{distributivity} \\*
  0 \cdot r
  &= 0 \tag{absorption of~$0$}
\end{align*}
for all~$r, s, t \in S$.  We will refer to the semiring~$(S,
\mathord+, \mathord\cdot, 0, 1)$ simply by its carrier set~$S$ and
denote multiplication by juxtaposition as usual.  For
the rest of the contribution, let $S$~be a commutative semiring.

A (commutative) \emph{ring} is simply a semiring in which every
element has an additive inverse, and a (commutative) \emph{semifield}
is similarly a semiring in which every element~$s \in S \setminus
\{0\}$ has a multiplicative inverse.  As usual, a (commutative)
\emph{field} is a ring that is also a semifield.  The \textsc{Boolean}
semifield is~$\BB = (\braces{0, 1}, \mathord{\max},
\mathord{\min}, 0, 1)$.

An $S$-semimodule~\cite{hebwei98,gol99} is a tuple~$(M, \mathord\oplus,
0_M, \mathord\odot)$ consisting of a commutative monoid~$(M,
\mathord\oplus, 0_M)$ and a mapping~$\mathord\odot \colon S \times
M \to M$ such that
\begin{align*}
  (r \cdot s) \odot u
  &= r \odot (s \odot u), \tag{associativity} \\*
  r \odot (u \oplus v)
  &= (r \odot u) \oplus (r \odot v), \tag{left
    distributivity} \\*
  (r + s) \odot u
  &= (r \odot u) \oplus (s \odot u), \tag{right
    distributivity} \\*
  0 \odot u
  &= 0_M \tag{absorption of~$0$}
\end{align*}
for all semiring elements~$r, s \in S$, also called \emph{scalars},
and semimodule elements~$u, v \in M$.  As before, we write just~$M$
for the semimodule~$(M, \mathord\oplus, 0_M, \mathord\odot)$,
and due to the compatibility axioms presented above, we can safely
stop distinguishing the semimodule addition~$\oplus$ and semiring
addition~$+$, writing just~$+$ for both, as well as mixed
multiplication~$\odot$ and semiring multiplication~$\cdot$,
writing~$\cdot$ for both, and the additive neutral element~$0_M$ of
the semimodule and its corresponding element~$0$ of the semiring,
writing~$0$ for both.  Finally, we let~$su = s \cdot u$ for
all~$s \in S$ and~$u \in M$.  It is clear that the semiring~$S$ itself
forms a semimodule, semimodules over rings
are simply modules, and semimodules over fields are vector spaces.  A
\emph{subsemimodule} of~$M$ is a subset~$N \subseteq M$ such that~$0
\in N$, $m + n \in N$ for all~$m, n \in N$, and $r \cdot n \in N$ for
all~$r \in S$ and~$n \in N$.  In other words, a subsemimodule is a
subset that forms a semimodule itself with respect to the operations
of~$M$ suitably restricted to~$N$.  We write~$N \preceq M$ if $N$~is a
subsemimodule of~$M$.  For every subset~$V \subseteq M$ we
write~$\angles{V}$ for the \emph{span} of~$V$ (i.e., the smallest
subsemimodule of~$M$ that contains~$V$) and say that~$\angles{V}$ is
\emph{generated} by~$V$.

Let $M$~and~$N$ be two semimodules and~$\varphi \colon M \to N$ a
mapping. Then $\varphi$~is \emph{linear} (or a \emph{semimodule
  homomorphism}) if
\[ s \cdot \varphi(u) = \varphi(s \cdot u) \qquad \text{and} \qquad
  \varphi(u + v) = \varphi(u) + \varphi(v) \] for all~$s \in S$
and~$u, v \in M$.  Note that~$\varphi(0) = 0$ if $\varphi$ is linear
by the former condition.  If $\varphi$~is bijective and linear, then
we call~$\varphi$ an \emph{isomorphism} and say that $M$~and~$N$ are
\emph{isomorphic}, which we write as~$M \cong N$.  We
let~$\ker \varphi = \braces{ m \in M \mid \varphi(m) = 0}$ be the
\emph{kernel} of~$\varphi$
and~$\im \varphi = \braces{\varphi(m) \mid m \in M }$ be the
\emph{image} of~$\varphi$ in~$N$, which is always a subsemimodule
of~$N$ provided that~$\varphi$ is linear.  The first isomorphism
theorem~\cite[p. 162, Corollary 5.16]{ALUFFI} states that $M/\ker
\varphi \cong \im \varphi$ for every ring~$S$ and linear
map~$\varphi$.  Here, $M/\ker \varphi$~is the set of equivalence
classes~$M/\sim$ with the equivalence relation~$\sim$ given
by~$m \sim n$ if~$m - n \in \ker \varphi$ and addition and scalar
multiplication defined by~$[m] + [n] = [m + n]$ and $s[m] = [sm]$
(where $[m]$~denotes the equivalence class of~$m$).  Thus, over a
ring~$S$, the linear map~$\varphi$ is injective if and only if~$\ker
\varphi = \0$.  Moreover, we let 
\[ \Hom(M, N) = \braces{\varphi \colon M \to N \mid \varphi \text{ is
      linear}}, \quad \End(M) = \Hom(M, M), \quad
  \text{and} \quad M^\vee = \Hom(M, S), \] which form
semimodules with pointwise addition and scalar multiplication.  The
semimodule~$\End(M)$ contains the \emph{endomorphisms} of~$M$, and
$M^\vee$ is called the \emph{dual semimodule} of~$M$.

Let $Q$~be an arbitrary set.  Then
\[ S^Q = \braces{ f \colon Q \to S \mid \ker f \text{ is
      co-finite} } \]
forms a semimodule with pointwise addition and scalar multiplication
that we call the \emph{free semimodule over $Q$} (unique up to
isomorphism as usual).  This is justified by the fact~\cite[p. 194]{gol99}
that for any semimodule~$M$ every mapping~$\varphi \colon Q \to M$
uniquely extends to a linear map~$\widetilde{\varphi} \colon S^Q \to
M$ such that~$\widetilde{\varphi}(\iota_q) = \varphi(q)$, where
$\iota_q \in S^Q$~is the mapping given for every~$p \in Q$ by
\[ \iota_q(p) =
  \begin{cases}
    1
    & \text{if } p = q \\*
    0 & \text{otherwise.}
  \end{cases}
\]
In particular, if~$\im \varphi$ generates~$M$, then
$\widetilde{\varphi}$~is surjective.  If $S$ is a field,
then every semimodule (i.e., vector space) is free, but the same
is not true for arbitrary semirings~$S$.  If $Q$~is finite, then we
say that $S^Q$~is of \emph{rank}~$n = \abs{Q}$ and will often
identify~$S^Q$ with the semimodule~$S^n$.  

The spaces~$\Hom(M, N)$, $\End(M)$, and~$M^\vee$ are particularly easy
to describe when $M$~and~$N$ are free of finite rank~\cite[p. 195]{gol99}.
These are exactly the matrix spaces
\[\Hom(S^Q, S^P) \cong S^{P \times Q}, \qquad 
  \End(S^Q) \cong S^{Q \times Q}, \qquad \text{and} \qquad
  (S^Q)^\vee \cong S^{\braces{1}\times Q} \cong S^Q.
\]
Note also that $S^Q$~itself can be identified with the matrix
space~$S^{Q \times \braces{1}}$.  Matrix multiplication (i.e.,
composition of linear maps) is then defined as follows: for every~$M
\in S^{P \times Q}$ and~$N^{Q \times R}$, the matrix~$M \cdot N \in
S^{P \times R}$ is given for all~$p \in P$ and~$r \in R$ by
\[ (M \cdot N)_{pr} = \sum_{q \in Q} M_{pq} \cdot N_{qr}. \]
We will usually state theorems in terms of linear maps instead of
matrices due to their greater generality (non-free semimodules do not
generally permit descriptions by matrices) and clarity of
presentation.

Let $\Sigma$~be an alphabet.  A \emph{weighted language} over~$\Sigma$
is a function~$L \colon \Sigma^* \to S$.  Given~$w \in \Sigma^*$ and a
weighted language~$L \colon \Sigma^* \to S$, we occasionally
write~$L_w$ instead of~$L(w)$.  The \emph{support} of~$L$ is the
set~$\supp L = \braces{w \in \Sigma^*
  \mid L_w \neq 0}$.

A \emph{linear representation}~\cite{drokus21} of a weighted
language~$L \colon \Sigma^* \to S$ is a tuple~$(Q, \mathord{\inn},
\mathord{\out}, \mu)$, where $Q$~is a finite set of \emph{states}, 
$\mathord{\inn} \in (S^Q)^\vee$ is an \emph{input vector},
$\mathord{\out} \in S^Q$ is an \emph{output vector}, and~$\mu \colon
\Sigma^* \to \End(S^Q)$ is a monoid homomorphism (where the monoid
structure on~$\End(S^Q)$ is given by composition of maps), such that
for every~$w \in \Sigma^*$
\[ L_w = \mathord{\inn} \cdot \mu(w) \cdot \mathord{\out}. \]
If a weighted language~$L$ admits a linear representation, then we
call~$L$ \emph{recognizable}.  This definition of recognizability is
equivalent to other common definitions given in terms of weighted
automata~\cite{sak09}.

\section{Semimodules of Finite Length}
\label{sec:semimodules}
We recall that the dimension of a finite dimensional vector space~$V$
provides an upper bound on the number of proper inclusions in any
chain of subspaces of~$V$; i.e., if $V_0 \preceq \dotsb \preceq V_r$
is a chain of subspaces of~$V$ and~$r > \dim V$, then there is at
least one~$0 \leq i < r$ such that~$V_i = V_{i+1}$.

In this spirit, we define the \emph{length}~$\ell(M) \in \N \cup
\{\infty\}$ of a semimodule~$M$ to be the (possibly infinite) least
upper bound on the number of proper inclusions in any chain of
subsemimodules of~$M$; i.e., 
\[ \ell(M) = \sup \braces{r \mid M_0 \prec \dotsb \prec M_r \text{ is
      a chain of strictly increasing subsemimodules of~$M$}}. \]
Clearly, $\dim V = \ell(V)$~for every finite dimensional vector
space~$V$.  However, the length is distinct from the rank of a free
module even if $S$~is a ring.  For example, $\ZZ$~has rank~$1$ as a $\ZZ$-module, but~$\ell(\ZZ) = \infty$ since
\[ \angles{k^m} \prec \angles{k^{m-1}} \prec \dotsb \prec \angles{k} \] 
is a chain of strictly increasing submodules of~$\ZZ$ for every~$k \in
\ZZ \setminus \braces{0, 1, -1}$ and~$m \geq 2$.

\begin{defn}
  We say that a semimodule~$M$ has \emph{finite length} if $\ell(M)
  \in \N$; i.e., $\ell(M)$ is finite.
\end{defn}

Let us provide some examples of semimodules that have finite length.

\begin{exmp}
  \mbox{ }
  \begin{enumerate}
  \item Finite dimensional vector spaces over fields have finite length.
  \item Finite semimodules have finite length.
  \item We consider the commutative monoid~$M = \QQ^{\geq 0} \cup
    \braces{\infty}$ with~$u + \infty = \infty$ for all~$u \in M$ and
    addition defined as in~$\QQ$ otherwise.  Then~$M$ is a semimodule
    over~$\QQ^{\geq 0}$ via
    \[ m \odot u = \begin{cases} 
        0
        & \text{if $m = 0$} \\*
        \infty
        & \text{if $m \neq 0$ and $u = \infty$} \\*
        m \cdot u
        & \text{otherwise.}
      \end{cases}
    \]
    We can easily see that the only subsemimodules of~$M$ are~$\0$,
    $\braces{0, \infty}$, $\QQ^{\geq 0}$ and~$M$ itself.  By considering the inclusions among these subsemimodules, we
    obtain~$\ell(M) = 2$.  Notably, this is an example of an infinite
    semimodule that has finite length, but cannot be embedded into a
    module over a ring.  The embedding
    fails since $\infty$~is additively absorptive (i.e., $u + \infty =
    \infty$ for all~$u \in M$, which yields that~$\infty$ cannot be
    inverted).  \qedhere
  \end{enumerate}
\end{exmp}

Let $M$~be a semimodule that has finite length.  Next we show that the
image~$\im \varphi$ of a linear map~$\varphi \colon M \to N$
necessarily has finite length as well. 

\begin{lem}
  \label{finitemap}
  Let $M$~and~$N$ be semimodules and~$\varphi \colon M \to N$ be a
  linear map.  Then~$\ell(\im \varphi) \leq \ell(M)$.
\end{lem}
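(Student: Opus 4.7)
The plan is to take an arbitrary strictly increasing chain of subsemimodules inside $\im \varphi$ and lift it via preimages to a strictly increasing chain in $M$ of the same length; this immediately bounds $\ell(\im \varphi)$ by $\ell(M)$.

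More concretely, suppose
\[ N_0 \prec N_1 \prec \dotsb \prec N_r \]
is a strictly increasing chain of subsemimodules of~$\im \varphi$. I would define $M_i = \varphi^{-1}(N_i)$ for every~$0 \leq i \leq r$. First I would verify that each~$M_i$ is a subsemimodule of~$M$: this is routine from linearity of~$\varphi$, since $\varphi(0) = 0 \in N_i$, $\varphi(m + m') = \varphi(m) + \varphi(m') \in N_i$ for $m, m' \in M_i$, and $\varphi(s m) = s \varphi(m) \in N_i$ for $s \in S$ and $m \in M_i$. The inclusion $M_i \subseteq M_{i+1}$ follows from $N_i \subseteq N_{i+1}$ by monotonicity of the preimage.

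The only step that needs a small argument is strictness. Since $N_i \prec N_{i+1}$, there exists some $n \in N_{i+1} \setminus N_i$; crucially, because $N_{i+1} \subseteq \im \varphi$, this $n$ lies in the image, so I can pick $m \in M$ with $\varphi(m) = n$. Then $m \in M_{i+1}$ but $m \notin M_i$ since $\varphi(m) = n \notin N_i$, and hence $M_i \prec M_{i+1}$. Thus
\[ M_0 \prec M_1 \prec \dotsb \prec M_r \]
is a strictly increasing chain of subsemimodules of~$M$, whence $r \leq \ell(M)$. Taking the supremum over all such chains in $\im \varphi$ yields $\ell(\im \varphi) \leq \ell(M)$.

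The proof is essentially routine; the one place where the argument could have gone wrong is the strictness step, which relies precisely on $N_{i+1} \subseteq \im \varphi$ so that a witness in $N_{i+1} \setminus N_i$ can be pulled back to~$M$. Note that this argument does not require the semiring to be a ring, and no appeal to the first isomorphism theorem is needed.
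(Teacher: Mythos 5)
Your proof is correct and follows essentially the same approach as the paper: pull back a chain in $\im\varphi$ along $\varphi^{-1}$, check that preimages of subsemimodules are subsemimodules, and observe that strictness is preserved precisely because each $N_{i+1}$ lies in $\im\varphi$ so a witness can be lifted. The paper's version is identical in substance, differing only in that it explicitly dispenses with the trivial case $\ell(M) = \infty$ up front rather than absorbing it into the final supremum.
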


\begin{proof}
  If~$\ell(M) = \infty$, then the statement holds automatically.
  Therefore, suppose that~$\ell(M) \in \N$ is finite.  We recall that the
  preimage~$\varphi^{-1}(L)$ of a subsemimodule~$L \preceq N$ is a
  subsemimodule of~$M$.  To see this, let~$u, v \in \varphi^{-1}(L)$.
  Then~$\varphi(u + v) = \varphi(u) + \varphi(v) \in L$, and
  thus~$u + v \in \varphi^{-1}(L)$.  Similarly, for every~$s \in S$ we
  have~$\varphi(su) = s\varphi(u) \in L$, and
  thus~$su \in \varphi^{-1}(L)$. Thus, any
  chain~$N_0 \preceq \dotsb \preceq N_r$ of subsemimodules of~$\im
  \varphi$ induces a chain~$\varphi^{-1}(N_0) \preceq \dotsb \preceq 
  \varphi^{-1}(N_r)$ of subsemimodules of~$M$.  Next, we establish
  that~$\varphi^{-1}(N_i) \prec \varphi^{-1}(N_{i+1})$ for every~$0
  \leq i < r$ such that~$N_i \prec N_{i+1}$.  To this end, let~$u \in
  N_{i+1} \setminus N_i$ and select~$v \in \varphi^{-1}(\braces{u})$,
  which is possible because~$N_{i+1} \preceq \im \varphi$.
  Obviously,~$v \notin \varphi^{-1}(N_i)$, which proves that~$v \in
  \varphi^{-1}(N_{i+1}) \setminus \varphi^{-1}(N_i)$ and
  thus~$\varphi^{-1}(N_i) \prec \varphi^{-1}(N_{i+1})$.  Hence,
  $\ell(\im \varphi) \leq \ell(M)$ follows immediately from the
  definition.
\end{proof}

The preceding lemma already suggests that semimodules of finite length
share nice properties with finite dimensional vector spaces.  In order
to harness these, it would be very desirable for the class of finite
length semimodules to have good closure properties.  However, it is
not even closed under direct sums, as the following example
demonstrates.

\begin{exmp}
  Consider the semifield~$S = \Qmax = (\QQ^{\geq 0}, \mathord{\max},
  \mathord{\cdot}, 0, 1)$.  As usual, $\Qmax$~is a semimodule over
  itself, and the presence of multiplicative inverses immediately
  yields that~$\ell(\Qmax) = 1$ because its only subsemimodules are
  $\0$~and~$\Qmax$: if~$H \preceq \Qmax$ and~$H \neq \0$, there is
  an~$h \in H$ with~$h \neq 0$, so~$s = s \cdot h^{-1} \cdot h \in
  \Qmax$ for all~$s \in \Qmax$; whereby~$H = \Qmax$ (indeed, this
  argument works for any semifield). 
  
  Now we consider the direct sum~$M = \Qmax \oplus \Qmax$ of two
  copies of~$\Qmax$, which consists of pairs of rational numbers with
  the maximum applied coordinate-wise.   Clearly, $M$~is also a
  $\Qmax$"~semimodule via a coordinate-wise product.  However,
  $M$~does not have finite length over~$\Qmax$ by the following
  lemma.
\end{exmp}

\begin{lem} \label{counterexample}
  The $\Qmax$"~semimodule~$\Qmax \oplus \Qmax$ has length~$\ell(\Qmax
  \oplus \Qmax) = \infty$.
\end{lem}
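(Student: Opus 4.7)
The plan is to exhibit arbitrarily long strictly increasing chains of subsemimodules of $\Qmax \oplus \Qmax$ by slicing the first quadrant with lines of differing slopes through the origin. The underlying observation is that the ``addition'' in $\Qmax$ is the coordinate-wise maximum while scalar multiplication is the ordinary product, so the set below any line $b = qa$ (for a fixed slope $q \in \Qmax$) is closed under both operations.

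Concretely, for each $q \in \Qmax$ I would define
\[ N_q = \braces{(a,b) \in \Qmax \oplus \Qmax \mid b \leq q \cdot a}. \]
The first step is to verify that each $N_q$ is a subsemimodule. It contains $(0,0)$. Closure under scalar multiplication by $r \in \Qmax$ is immediate since $b \leq qa$ implies $rb \leq q(ra)$. Closure under the semimodule addition follows from the monotonicity of $\max$: if $b_1 \leq qa_1$ and $b_2 \leq qa_2$, then $\max(b_1,b_2) \leq \max(qa_1, qa_2) = q \cdot \max(a_1,a_2)$.

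The second step is to show $N_q \prec N_{q'}$ whenever $q < q'$. Containment is trivial. Strictness is witnessed by the point $(1,q')$, which lies in $N_{q'}$ but not in $N_q$ since $q' \not\leq q$. Therefore, for every $k \in \N$ the chain
\[ N_1 \prec N_2 \prec \dotsb \prec N_k \]
is a strictly increasing chain of subsemimodules of length $k$, and by definition of the supremum we conclude $\ell(\Qmax \oplus \Qmax) = \infty$.

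I do not expect any real obstacle here; the only subtle point is recognizing that the tropical/max-plus structure of $\Qmax$ makes every ``cone'' $\{b \leq qa\}$ in the positive quadrant closed under the semimodule addition, which is what supplies the continuum of subsemimodules (in sharp contrast to $\Qmax$ itself, which has only the two trivial ones as noted in the preceding example).
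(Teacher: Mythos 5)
Your proof is correct. Each $N_q = \{(a,b) \in \Qmax \oplus \Qmax \mid b \leq q a\}$ is indeed a subsemimodule (the three closure checks are exactly right: scalar multiplication and $\max$ both preserve the cone since $q, r \geq 0$), and $(1,q')$ witnesses strictness of $N_q \prec N_{q'}$ for $q < q'$, so $\ell(\Qmax \oplus \Qmax) = \infty$.

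Your route differs from the paper's. The paper builds \emph{finitely generated} subsemimodules $M_i = \angles{\{(0,1), (1,1), \dotsc, (i,1)\}}$ and proves strictness of $M_i \prec M_{i+1}$ indirectly, via an auxiliary invariant $q(a,b) = a/b$ shown to be scalar-invariant and sub-$\max$; one then argues $u_{i+1} \notin M_i$ because $q$ cannot exceed $q(u_i) = i$ on $M_i$. You instead define the cone subsemimodules $N_q$ explicitly and check the subsemimodule axioms and strict inclusions directly, bypassing any auxiliary invariant. (In fact your $N_q$ for integer $q$ coincide, up to swapping coordinates, with the paper's $M_i$, so the two proofs describe the same chain; but your presentation is shorter and more self-contained.) What the paper's version buys is that its chain consists of \emph{finitely generated} subsemimodules, which is exactly what is needed for the footnote's remark that \textsc{Noetherian} semimodules fail to be closed under direct sums; your cones, being defined by an inequality, are not obviously finitely generated, so they establish the lemma as stated but not that sharper corollary. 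For the lemma itself, your argument is a perfectly good and arguably cleaner alternative.
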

\begin{proof}
  Let $M = \Qmax \oplus \Qmax$.  First, we define the function~$q
  \colon M \to \QQ$ such that~$q \bigl(\langle a, b\rangle \bigr) =
  \tfrac ab$ for every~$a, b \in \Qmax$.  Obviously,
  \begin{equation}
    \label{eq:1}
    q \bigl(s\langle a, b\rangle \bigr) = q \bigl(\langle sa,
    sb\rangle \bigr) = \frac{sa}{sb} = \frac ab = q \bigl(\langle a,
    b\rangle \bigr)
  \end{equation}
  for all~$\langle a, b\rangle \in M$ and~$s \in \Qmax$.
  Additionally, for all $\langle a, b\rangle, \langle c,d\rangle \in
  M$ we have
  \begin{equation}
    \label{eq:2}
    q \Bigl(\max \bigl(\langle a, b\rangle, \langle c,d\rangle \bigr)
    \Bigr) \leq \max \Bigl(q \bigl(\langle a, b\rangle \bigr), q
    \bigl(\langle c,d\rangle \bigr) \Bigr)
  \end{equation}
  because
  \[ \frac a{\max(b,d)} \leq \frac ab = q \bigl(\langle a, b\rangle
    \bigr) \qquad \text{and} \qquad \frac c{\max(b,d)} \leq \frac cd =
    q \bigl(\langle c,d\rangle \bigr), \] 
  which yield
  \[ q \Bigl( \max \bigl(\langle a, b\rangle, \langle c,d\rangle
    \bigr) \Bigr) = \frac{\max(a, c)}{\max(b,d)} = \max \Big(
    \frac{a}{\max(b,d)}, \frac{c}{\max(b,d)} \Big) \leq \max \Bigl(q
    \bigl(\langle a, b\rangle \bigr), q \bigl(\langle c,d\rangle
    \bigr) \Bigr). \] For every~$i \in \N$
  let~$u_i = \langle i, 1\rangle$
  and~$M_i = \angles{\braces{\seq u0i}}$ be the subsemimodule
  generated by~$\braces{\seq u0i}$.  Due to the properties
  \eqref{eq:1}~and~\eqref{eq:2} of~$q$, we have~$q(u) \leq q(u_i)$ for
  every~$u \in M_i$.  This immediately yields~$M_i \prec M_{i+1}$ for
  every~$i \in \N$ and thus~$M_0 \prec \dotsb \prec M_i \prec \dotsb$
  is an infinite chain of strictly increasing
  subsemimodules.\footnote{In 
    fact, this is an example of a more general pathology of
    semimodules.  Finite length semimodules are \textsc{Noetherian}
    since they satisfy the Ascending Chain Condition (i.e., every
    ascending chain of subsemimodules terminates).  This proof
    demonstrates that \textsc{Noetherian} semimodules, unlike
    \textsc{Noetherian} modules over rings, are not closed under
    direct sums.  The same is true for the Descending Chain Condition,
    which can be seen by setting $v_i = (1, i)$~and~$N_i =
    \angles{\braces{\seq v0i}}$ for all $i \in \mathbb{N}$.  Then the
    chain~$N_0 \succ \dotsb \succ N_i \succ \dotsb$ does not terminate
    by the same argument as above.}
\end{proof}

Fortunately, for rings~$S$ the situation does not look nearly as
bleak and the expected equalities for length hold, as expressed in the
next theorem. 

\begin{thm}
  \label{artinrings}
  Suppose that $S$~is a ring.
  \begin{enumerate}
  \item Let $M$~and~$N$ be modules such that~$N \preceq M$.  If
    $N$~and~$M/N$ both have finite length, then $M$~has finite length
    and~$\ell(M) = \ell(N) + \ell(M/N)$.
  \item If $M$~and~$N$ are finite-length modules, then $\ell(M \oplus
    N) = \ell(M) + \ell(N)$.
  \item If $S$~has finite length, then every module generated by
    $n \in \N$~elements has length at most~$n \cdot \ell(S)$.
  \end{enumerate}
\end{thm}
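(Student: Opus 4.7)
The plan is to prove the three parts in order, with (i) doing the real work and (ii), (iii) falling out by short reductions.

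For (i) I would follow the classical additivity argument for modules, now phrased in terms of $\ell$. Let $\pi\colon M \to M/N$ denote the canonical quotient map. Given a strictly increasing chain $M_0 \prec \dotsb \prec M_r$ in $M$, I would form the two derived chains $M_0 \cap N \preceq \dotsb \preceq M_r \cap N$ in $N$ and $\pi(M_0) \preceq \dotsb \preceq \pi(M_r)$ in $M/N$. The central claim is that every strict step $M_i \prec M_{i+1}$ induces a strict step in at least one of the two chains. To see this, suppose both $M_i \cap N = M_{i+1} \cap N$ and $\pi(M_i) = \pi(M_{i+1})$. For any $x \in M_{i+1}$ there is $y \in M_i$ with $\pi(x) = \pi(y)$; since $S$ is a ring we may form $x - y \in \ker \pi = N$, and since $x - y \in M_{i+1}$ we get $x - y \in M_{i+1} \cap N = M_i \cap N \preceq M_i$, hence $x \in M_i$, contradicting $M_i \prec M_{i+1}$. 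This immediately yields $r \leq \ell(N) + \ell(M/N)$, so $\ell(M) \leq \ell(N) + \ell(M/N)$. For the reverse inequality, I would take chains $N_0 \prec \dotsb \prec N_s$ in $N$ and $K_0 \prec \dotsb \prec K_t$ in $M/N$ with $s$ close to $\ell(N)$ and $t$ close to $\ell(M/N)$, then concatenate them via preimages under $\pi$: since $N = \pi^{-1}(\braces{0})$, we can assume $N_s = N \preceq \pi^{-1}(K_0)$ and extend to $\pi^{-1}(K_t) = M$, producing a strictly increasing chain in $M$ of length $s + t$. Taking suprema gives the matching lower bound.

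For (ii) I would apply (i) to the submodule $\braces{0} \oplus N \preceq M \oplus N$, which is clearly isomorphic to $N$ and whose quotient is isomorphic to $M$. Both have finite length by assumption, so (i) yields $\ell(M \oplus N) = \ell(N) + \ell(M) = \ell(M) + \ell(N)$.

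For (iii), if $M$ is generated by $n$ elements, the universal property of the free module gives a surjective linear map $\varphi \colon S^n \to M$. Iterating (ii) we obtain $\ell(S^n) = n \cdot \ell(S)$, and Lemma \ref{finitemap} applied to $\varphi$ gives $\ell(M) = \ell(\im \varphi) \leq \ell(S^n) = n \cdot \ell(S)$.

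The main obstacle is the subtraction step in the proof of (i): it is precisely there that additive inverses enter and the argument breaks down for general semirings. Everything else is essentially bookkeeping. This is also consistent with Lemma \ref{counterexample}, which shows that the conclusion of (ii) (and thus of (i)) genuinely fails over the semifield $\Qmax$, so one expects the ring hypothesis to be indispensable at exactly this point.
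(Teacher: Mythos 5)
Your proposal is correct and follows essentially the same route as the paper: for (i) the same two derived chains ($M_i \cap N$ in $N$ and the image in $M/N$, which equals the paper's $(N + L_i)/N$), with the same subtraction step $x - y \in N$ isolating the ring hypothesis, and the same preimage-concatenation argument for the reverse inequality; parts (ii) and (iii) reduce exactly as in the paper. The only cosmetic difference is that you phrase the upper bound in (i) as a direct count of strict steps rather than the paper's pigeonhole-plus-contradiction formulation, which is the same argument written contrapositively.
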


\begin{proof}
  \mbox{ }
  \begin{enumerate}
  \item The proof idea for the inequality~$\ell(M) \leq \ell(N) +
    \ell(M/N)$ draws from a proof of the analogous fact for
    \textsc{Noetherian} rings~\cite[10f, Proposition 3.3]{MILNE}.  For
    the sake of a contradiction, assume that $\ell(M) \geq r$,
    where~$r = \ell(N) + \ell(M/N) + 1$.   Then there exists a
    chain~$L_0 \prec \dotsb \prec L_r$ of strictly increasing
    submodules of~$M$.  In the corresponding chain  
    \[ \frac{N + L_0}{N} \preceq \dotsb \preceq \frac{N + L_r}{N} \]
    of $r+1$~submodules of~$M/N$, at most~$\ell(M/N)$ inclusions are
    proper, so $\ell(N) + 1$~inclusions are not.  Similarly, in the
    chain~$L_0 \cap N \preceq \dotsb \preceq L_r \cap N$ of submodules
    of~$N$, at most~$\ell(N)$ inclusions are proper, so $\ell (M/N) +
    1$~inclusions are not.  By the pigeonhole principle, there
    exists~$0 \leq i < r$ such that
    \[ L_i \cap N = L_{i+1} \cap N \quad \text{and} \quad N + L_i = N
      + L_{i+1}. \]
    We note that the latter result relies on the fact that $N \leq H
    \leq K$~and~$H/N = K/N$ together imply~$H = K$~(since if $k \in K$ and $[h] = [k]$ for some $h \in H$, then $k - h \in N \preceq H$, so $k = h + (k - h) \in H$). 

    Now, let~$u \in L_{i+1}$ be arbitrary.  By the second equation
    above, we have~$u = n + v$ for some~$n \in N$ and~$v \in L_i$.  This
    yields~$n = u - v \in L_{i+1} \cap N = L_i \cap N$ and thus~$u \in
    L_{i}$.  Therefore,~$L_{i+1} = L_i$, which is the desired
    contradiction.

    Thus, we have shown that~$\ell(M) \leq \ell(N) + \ell(M/N)$.  It
    remains to show the converse inequality.  Note that any submodule
    of~$M/N$ has the form~$L/N$ for some~$L \preceq M$ such that~$N
    \preceq L$ since $N$~is the preimage of~$[0] \in L/N$.  This claim
    was already shown in a more general setting in the proof of
    Lemma~\ref{finitemap}.  Therefore, let $N_0 \prec \dotsb \prec
    N_n$ with~$n = \ell(N)$~and~$L_{0}/N \prec \dotsb \prec L_m/N$
    with~$m = \ell(M/N)$ be chains of strictly increasing submodules
    of $N$~and~$M/N$, respectively, which exist by the definition of
    the respective length.  These chains can be concatenated to
    obtain a chain
    \[ N_0 \prec \dotsb \prec N_n \preceq L_0 \prec \dotsb \prec
      L_m \]
    of submodules of~$M$.  Any proper inclusion in the original chains
    must also be a proper inclusion in the concatenated chain.
    Thus,~$\ell(M) \geq n + m = \ell(N) + \ell(M/N)$.
  \item Let us consider~$N_0 = \braces{(0, n) \mid n \in N}$.  Then
    $(M \oplus N) / N_0 \cong M$~and~$N_0 \cong N$, which yields the
    claim by Statement~(i).
  \item Let~$M$ be a module generated by $n$~elements.  Hence $M$~is
    a linear image of the free module~$S^n$, which by iteration of
    Statement~(ii) satisfies~$\ell(S^n) = n \cdot \ell(S)$.  Thus, the
    claim follows directly from Lemma~\ref{finitemap}.  \qedhere
  \end{enumerate}
\end{proof}

Hence every finite-length ring has the property that all its finitely
generated modules also have finite length.  Naturally, there are other
semirings that enjoy this property.  Trivially, every finitely
generated semimodule over a finite semiring (such as the
\textsc{Boolean} semifield~$\BB$) is also finite and therefore of finite length.  The following definition establishes the property
just discussed, which is fulfilled in all rings and all finite
semirings.

\begin{defn}
  We say that $S$~is \textsc{Artinian} if every finitely generated
  semimodule has finite length.
\end{defn}

As demonstrated in the proof of Theorem~\ref{artinrings}(iii), in
order to establish that $S$~is \textsc{Artinian} it suffices to show
that free semimodules of finite rank have finite length.  Our naming
\textsc{Artinian} is a slight abuse of traditional notions since the
term is usually used to characterize those modules that satisfy
the Descending Chain Condition (i.e., every descending chain of
submodules terminates).  However, in rings these two notions
coincide.  Any ring that satisfies the Descending Chain
Condition~(DCC) also satisfies the Ascending Chain
Condition~(ACC)~\cite[p.~90, Theorem~8.5]{ATIYAH}, and any module that
satisfies both DCC~and~ACC has finite length~\cite[p.~77, Propositions
6.7~and~6.8]{ATIYAH}.  By Theorem~\ref{artinrings}(iii), all finitely
generated modules over a finite-length ring also have finite length.
The converse implication is trivial.  In general, for semirings this
equivalence need not hold (see footnote to
Lemma~\ref{counterexample}), but since the DCC is nowhere as important
for semirings as it is for rings, the authors believe that our use of terminology is harmless.

\textsc{Artinian} semirings retain a very convenient property of
endomorphisms of vector spaces, which will be crucial for our
approach.

\begin{thm}
  \label{inj=surj}
  Suppose that $S$~is {\upshape \textsc{Artinian}}, and let $M$~be a
  finite-length semimodule and~$\alpha \in \End(M)$.  Then $\alpha$~is
  surjective if and only if $\alpha$~is injective.
\end{thm}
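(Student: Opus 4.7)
The plan is to handle the two implications separately. For injective $\Rightarrow$ surjective, an injective linear endomorphism $\alpha$ restricts to a linear bijection $M \to \im \alpha$, and the inverse of a bijective linear map is automatically linear, so $M \cong \im \alpha$ and hence $\ell(\im \alpha) = \ell(M)$. Since $\im \alpha \preceq M$ and $M$ has finite length, I would then invoke the fact that any proper subsemimodule of a finite-length semimodule has strictly smaller length: if $\im \alpha \prec M$, extending a maximal chain in $\im \alpha$ by the additional proper inclusion $\im \alpha \prec M$ would produce a chain in $M$ of length $\ell(M) + 1$, contradicting the definition of $\ell(M)$. Hence $\im \alpha = M$.

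For surjective $\Rightarrow$ injective, the usual ring-based argument via triviality of $\ker \alpha$ is unavailable, since in a semimodule $\alpha(u) = \alpha(v)$ cannot be rewritten as $\alpha(u - v) = 0$. I would instead work in $M \oplus M$ with the \emph{congruence kernels}
\[ K_n = \braces{(u, v) \in M \oplus M \mid \alpha^n(u) = \alpha^n(v)} \qquad \text{for } n \in \N, \]
which are subsemimodules of $M \oplus M$ by linearity of $\alpha^n$ and form an ascending chain $K_0 \preceq K_1 \preceq \dotsb$, with $K_0$ the diagonal. As a preliminary step, $M$ is finitely generated (picking one element $m_i \in M_i \setminus M_{i-1}$ from a maximal chain yields at most $\ell(M)$ generators, each $M_i$ coinciding with $M_{i-1} + \angles{\braces{m_i}}$ by maximality), so $M \oplus M$ is finitely generated and, by the \textsc{Artinian} hypothesis, has finite length; hence the chain $(K_n)_{n \in \N}$ stabilizes at some $K_n = K_{n+1}$. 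Given $(u, v) \in K_1$, iterating surjectivity of $\alpha$ supplies $u_n, v_n \in M$ with $\alpha^n(u_n) = u$ and $\alpha^n(v_n) = v$; since $\alpha^{n+1}(u_n) = \alpha(u) = \alpha(v) = \alpha^{n+1}(v_n)$, the pair $(u_n, v_n)$ lies in $K_{n+1} = K_n$, whence $u = \alpha^n(u_n) = \alpha^n(v_n) = v$.

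The main obstacle is this reverse direction: recognizing that the ordinary kernel is insufficient and that one must pass to the congruence kernel in $M \oplus M$. This move depends on $M \oplus M$ still having finite length, which is precisely where the \textsc{Artinian} assumption is indispensable; Lemma~\ref{counterexample} shows that over a non-\textsc{Artinian} semiring such as $\Qmax$, finite length of $M$ need not pass to $M \oplus M$, and the chain $(K_n)$ can then fail to stabilize.
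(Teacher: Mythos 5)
Your proof is correct. For surjective $\Rightarrow$ injective you follow essentially the paper's route: the same congruence kernels $\Ker \alpha^n = \{(u,v) \in M \oplus M \mid \alpha^n(u) = \alpha^n(v)\}$, the same stabilization of the ascending chain, and the same use of surjectivity to lift $(u,v) \in \Ker \alpha$ to a pair in $\Ker \alpha^{i+1} = \Ker \alpha^i$. You are, however, more careful about why $\ell(M \oplus M)$ is finite: the paper sets $r = \ell(M \oplus M)$ without comment, while you explicitly extract a finite generating set from a maximal chain of subsemimodules of $M$ (so finite length implies finitely generated) and then invoke the \textsc{Artinian} hypothesis on $M \oplus M$. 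That extra step is genuinely needed, since Lemma~\ref{counterexample} shows finite length of $M$ alone does not give finite length of $M \oplus M$. For injective $\Rightarrow$ surjective you take a different, more direct path: the paper shows that $u \notin \im \alpha^j$ forces $\alpha(u) \notin \im \alpha^{j+1}$ and derives an infinite strictly descending chain of images whenever $\alpha$ fails to be surjective, whereas you observe that injectivity makes $\alpha$ an isomorphism onto its image, so $\ell(\im \alpha) = \ell(M)$, and then note that a proper subsemimodule of a finite-length semimodule must have strictly smaller length. Both variants use only the finite length of $M$ in this direction, as the paper remarks, but yours avoids the induction over powers of $\alpha$ and mirrors the vector-space dimension argument more closely, while the paper's more closely parallels the classical descending-chain proof.
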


\begin{proof}
  The proof simply combines the well-known facts that surjective
  endomorphisms of \textsc{Noetherian} modules are injective, and
  injective endomorphisms of modules that satisfy the Descending Chain
  Condition (i.e., \textsc{Artinian} in the traditional sense) are
  surjective.  These two facts are established here for our
  semimodules.

  We start with necessity. Suppose that $\alpha$~is surjective.  For
  every endomorphism~$\varphi \in \End(M)$, we let
  \[ \Ker \varphi = \braces{(u, v) \in M \oplus M \mid \varphi(u) =
      \varphi(v)}. \]
  Then $\Ker \varphi$~is a subsemimodule of~$M \oplus M$ by the
  linearity of~$\varphi$.  Let~$r = \ell(M \oplus M)$ and consider the
  chain
  \[ \0 \prec \Ker \alpha^0 \preceq \Ker \alpha^1 \preceq \dotsb
    \preceq \Ker \alpha^r. \]
  The first strictness is justified by~$\Ker \alpha^0 =
  \Ker \id_M = \braces{(u, u) \mid u \in M} \succ \0$.  Thus, by the
  finite length~$r$, there exists some~$0 \leq i < r$ such that~$\Ker
  \alpha^i = \Ker \alpha^{i+1}$.

  To prove injectivity, let~$u, v \in M$ such that~$\alpha(u) =
  \alpha(v)$.  Recall that compositions of surjective functions are
  surjective.  By the surjectivity of $\alpha$~and~$\alpha^i$, there
  exist~$x, y \in M$ such that $\alpha^i(x) = u$~and~$\alpha^i(y) =
  v$.  Consequently, $\alpha^{i+1}(x) = \alpha^{i+1}(y)$ and thus~$(x,
  y) \in \Ker \alpha^{i+1} = \Ker \alpha^i$ by our choice of~$i$.
  However, $(x, y) \in \Ker \alpha^i$ directly yields~$u = \alpha^i(x)
  = \alpha^i(y) = v$.  Hence, $\alpha$~is injective.

  We continue with sufficiency, so let $\alpha$~be injective.  We show
  for all~$j \in \N$ that the condition~$u \notin \im \alpha^j$
  implies~$\alpha(u) \notin \im \alpha^{j+1}$.  For the sake of a
  contradiction, suppose that~$j \in \N$ and~$u \in M \setminus \im
  \alpha^j$ are such that~$\alpha(u) \in \im \alpha^{j+1}$.  Clearly,
  there exists~$v \in M$ such that $\alpha(u) = \alpha^{j+1}(v) =
  (\alpha\alpha^j)(v) = \alpha \bigl(\alpha^j(v) \bigr)$.
  Next we utilize the injectivity of~$\alpha$ to conclude~$u =
  \alpha^j(v)$, which yields~$u \in \im \alpha^j$ and our desired
  contradiction.  Thus, $\alpha(u) \notin \im \alpha^{j+1}$.

 Suppose that $\alpha$~is
  not surjective.  Then there exists~$u \in M$ such that~$u \notin \im
  \alpha$.  A straightforward induction utilizing the statement proved
  in the previous paragraph can now be used to show that~$\alpha^j(u)
  \notin \im \alpha^{j+1}$ for all~$j \in \N$.  However, this yields
  that the chain  
  \[ M = \im \alpha^0 \succeq \im \alpha^1 \succeq \dotsb \succeq \im
    \alpha^j \succeq \dotsb \] 
  has infinitely many proper inclusions, which contradicts that
  $M$~has finite length.  Therefore, $\alpha$~must be surjective.  We
  note that for sufficiency we only used that $M$~has finite length
  (not that $S$~is actually \textsc{Artinian}).
\end{proof}

\section{Pseudoregular Endomorphisms}
\label{sec:pseudoregular}
At this point we have established sufficient background for our main
notion, pseudoregular endomorphisms, that will be successfully
utilized in our pumping lemmata.  The special properties that define
them are established in the next lemma.

\begin{lem}[see~\protect{\cite[Proposition~1]{REUT80}}]
  \label{pseudocharacterization}
  Let $M$~be a semimodule and~$\alpha \in \End(M)$.  The following are
  equivalent:
  \begin{enumerate}
  \item $\im \alpha = \im \alpha^2$.
  \item There exist~$\gamma, \beta \in \End(M)$ such that $\alpha =
    \gamma\beta$~and~$\im \beta = \im (\beta\gamma\beta)$.
  \end{enumerate}
\end{lem}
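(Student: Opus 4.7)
The plan is to prove each direction by a direct calculation; neither requires anything beyond the basic identity $\im(\psi\varphi) = \psi(\im \varphi)$ for composable linear maps.

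For the direction (i)~$\Rightarrow$~(ii), I would simply take $\gamma = \id_M$ and $\beta = \alpha$. The factorization $\alpha = \gamma\beta$ is then immediate, and the required image identity reduces to
\[ \im \beta = \im \alpha = \im \alpha^2 = \im(\alpha \cdot \id_M \cdot \alpha) = \im(\beta\gamma\beta), \]
which is precisely the hypothesis (i). So this direction is essentially trivial once one notices that the identity is a legitimate endomorphism to use.

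For the converse (ii)~$\Rightarrow$~(i), I would expand $\alpha^2 = (\gamma\beta)(\gamma\beta) = \gamma\beta\gamma\beta$ and then apply the identity $\im(\psi\varphi) = \psi(\im \varphi)$ to both $\alpha = \gamma\beta$ and $\alpha^2 = \gamma(\beta\gamma\beta)$. This yields
\[ \im \alpha = \gamma(\im \beta) \qquad \text{and} \qquad \im \alpha^2 = \gamma\bigl(\im(\beta\gamma\beta)\bigr), \]
and the hypothesis $\im \beta = \im(\beta\gamma\beta)$ lets me conclude that these two subsemimodules coincide.

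There is no real obstacle here — both implications are almost tautological, and in particular the apparently stronger condition (ii) collapses to (i) under the trivial choice $\gamma = \id_M$. The value of the lemma, I expect, lies not in its depth but in offering the more symmetric formulation (ii), which will be the form convenient for the later development of pseudoregular endomorphisms in this section.
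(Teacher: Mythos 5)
Your proof is correct and matches the paper's argument exactly: the same choice $\gamma = \id_M$, $\beta = \alpha$ for (i)$\Rightarrow$(ii), and the same one-line computation via $\im(\psi\varphi) = \psi(\im\varphi)$ for (ii)$\Rightarrow$(i).
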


\begin{proof}
  \mbox{ }
  \begin{itemize}
  \item We start with the implication~(i)~$\to$~(ii).  To this end, we
    select $\gamma = \mathord{\id_M}$~and~$\beta = \alpha$ and observe
    that
    \[ \alpha = \mathord{\id_M}\alpha = \gamma\beta \qquad \text{and}
      \qquad \im \beta = \im \alpha = \im \alpha^2 = \im (\alpha
      \mathord{\id_M} \alpha) = \im (\beta \gamma \beta). \]
  \item For the converse implication~(ii)~$\to$~(i), let~$\gamma,
    \beta \in \End(M)$ such that~$\alpha = \gamma\beta$
    and~$\im \beta = \im (\beta\gamma\beta)$.  Then
    \[ \im \alpha^2 = \im(\gamma\beta\gamma\beta) = \gamma
      \bigl(\im(\beta \gamma \beta) \bigr) = \gamma (\im \beta) = \im
      (\gamma\beta) = \im \alpha. \qedhere \]
  \end{itemize}
\end{proof}

\begin{defn}
  Let $M$~be a semimodule.  An endomorphism~$\alpha \in \End(M)$
  satisfying the conditions of Lemma~\ref{pseudocharacterization} is
  called \emph{pseudoregular}.  
\end{defn}

\textsc{Reutenauer}~\cite[Proposition~1]{REUT80} provides further
characterizations of pseudoregular endomorphisms that hold for a
field~$S$.  It is worthwhile to consider the following consequence.
Let $\alpha$~be a nonzero pseudoregular endomorphism of a finite
dimensional vector space~$V$.  Then there exists~$k \leq \dim V$ and a
basis~$\mathcal{B}$ of~$V$ such that the matrix representation
of~$\alpha$ with respect to~$\mathcal{B}$ is a block matrix
\[ \begin{pmatrix}
    A
    & 0_{(n-k) \times k} \\
    0_{k \times (n-k)}
    & 0_{(n - k) \times (n - k)}
  \end{pmatrix},
\]
where $A$~is an invertible $k \times k$"~matrix and $0_{m \times
  n}$~is the $m \times n$"~zero matrix for every~$m, n \in \NN$. 

Using Theorem~\ref{inj=surj} we can adapt another characterization
mentioned in~\cite[Proposition~1]{REUT80} to \textsc{Artinian}
semirings.

\begin{lem}
  Suppose that $S$~is {\upshape \textsc{Artinian}}, and let $M$ be a
  semimodule that has finite length.  Then $\alpha \in \End(M)$ is
  pseudoregular if and only if~$\alpha_* \colon \im \alpha \to \im
  \alpha$, which is defined for every~$u \in \im \alpha$
  by~$\alpha_*(u) = \alpha(u)$, is an isomorphism.  If $S$~is a ring,
  then this is equivalent to~$\im \alpha \cap \ker \alpha = \0$.
\end{lem}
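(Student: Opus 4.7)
The plan is to reduce the equivalence to Theorem~\ref{inj=surj} applied to the restriction~$\alpha_*$, and to the elementary identification of the kernel of~$\alpha_*$ in the ring case. First I would check that $\alpha_*$ is a well-defined endomorphism of~$\im \alpha$: since $\alpha(\im \alpha) = \im \alpha^2 \subseteq \im \alpha$ always holds and $\alpha$ is linear, $\alpha_* \in \End(\im \alpha)$. Moreover, $\alpha_*$ is surjective precisely when $\im \alpha^2 = \im \alpha$, i.e.\ exactly when $\alpha$ is pseudoregular in the sense of condition~(i) of Lemma~\ref{pseudocharacterization}.

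For the first equivalence, the key observation is that $\im \alpha$ has finite length by Lemma~\ref{finitemap} (since $M$ does), so Theorem~\ref{inj=surj}, applied to the endomorphism~$\alpha_*$ of the finite-length semimodule~$\im \alpha$ over the \textsc{Artinian} semiring~$S$, yields that $\alpha_*$ is surjective if and only if it is injective, hence if and only if it is an isomorphism. Chaining this with the observation above gives
\[ \alpha \text{ pseudoregular} \iff \im \alpha = \im \alpha^2 \iff \alpha_* \text{ surjective} \iff \alpha_* \text{ isomorphism}. \]

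For the ring case I would invoke the consequence of the first isomorphism theorem already recalled in Section~\ref{sec:prelim}, namely that a linear map between modules over a ring is injective iff its kernel is~$\0$. Applying this to~$\alpha_*$ and computing
\[ \ker \alpha_* = \braces{u \in \im \alpha \mid \alpha(u) = 0} = \im \alpha \cap \ker \alpha, \]
we obtain that $\alpha_*$ is injective iff $\im \alpha \cap \ker \alpha = \0$. Combined with the first equivalence (which in particular gives isomorphism~$\iff$~injectivity of~$\alpha_*$), this yields the second equivalence.

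I do not expect a genuine obstacle here; the content is a direct assembly of Lemma~\ref{finitemap}, Theorem~\ref{inj=surj}, and Lemma~\ref{pseudocharacterization}. The only point that requires a little care is verifying that the standing hypotheses of Theorem~\ref{inj=surj} are indeed satisfied by~$\alpha_*$ on~$\im \alpha$, which is why the invocation of Lemma~\ref{finitemap} to transfer the finite-length property from~$M$ to~$\im \alpha$ is the linchpin of the argument.
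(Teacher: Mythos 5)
Your argument follows the paper's proof essentially verbatim: translate pseudoregularity into surjectivity of~$\alpha_*$, apply Theorem~\ref{inj=surj} to~$\alpha_*$ on~$\im \alpha$, and in the ring case identify $\ker \alpha_* = \im \alpha \cap \ker \alpha$ via the kernel criterion for injectivity. You are in fact slightly more careful than the paper in explicitly invoking Lemma~\ref{finitemap} to confirm that~$\im \alpha$ has finite length before applying Theorem~\ref{inj=surj}, a hypothesis the paper's one-line proof leaves implicit.
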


\begin{proof}
  Clearly, $\im \alpha = \im \alpha^2$~is equivalent to surjectivity
  of~$\alpha_*$, so the result follows from
  Theorem~\ref{inj=surj}.  If $S$~is a ring, then~$\im \alpha \cap
  \ker \alpha = \0$ is equivalent to injectivity of~$\alpha_*$, and
  thereby surjectivity.
\end{proof}

Next we show a generalization of~\cite[Theorem~2.2]{JAC80}.  The
general proof idea is largely unchanged, but the lack of vector space
structure requires some adjustments in the details.  The same theorem
can be shown for vector spaces in a much more straightforward manner
using linear recurrences (see~\cite[Lemma~1]{REUT80}), but as this
proof relies on the existence of characteristic polynomials of
endomorphisms, it cannot be directly adapted to more general
semirings.

\begin{thm}[{see~\protect{\cite[Theorem~2.2]{JAC80}}}]
  \label{sequence}
  Let $M$ be a semimodule such that its dual~$M^\vee$ has finite
  length.  Moreover, let~$\alpha \in \End(M)$ be pseudoregular, and
  let $f \in M^\vee = \Hom(M, S)$~and~$v \in M$.  We consider the
  sequence~$(s_k)_{k \in \N}$ of elements of~$S$ given for every~$k
  \in \N$ by
  \[ s_k = f \bigl(\alpha^k (v) \bigr). \]
  If~$s_1 \neq 0$, then~$s_k \neq 0$ for infinitely many~$k \in \N$.
  More precisely, at most~$\ell(M^\vee)$ values of~$s_k$ vanish in a
  row.
\end{thm}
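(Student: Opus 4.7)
The plan is to work with $g_k:=f\circ\alpha^k=(\alpha^\vee)^k(f)\in M^\vee$ for $k\in\N$, where $\alpha^\vee\colon M^\vee\to M^\vee$ denotes the dual endomorphism $h\mapsto h\circ\alpha$; then $s_k=g_k(v)$. The finite length $L:=\ell(M^\vee)$ will produce two kinds of linear relations among the $g_k$'s---an ascending-chain relation propagating zeros of $(s_k)$ forward, and a descending-chain relation propagating them backward---and pseudoregularity will let us translate from $v$ to a suitable preimage $w$ so that the backward relation reaches position~$1$.

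For the forward relation I fix $a\geq 1$ and note that the ascending chain $H_n:=\langle g_a,\dotsc,g_{a+n-1}\rangle\preceq M^\vee$ stabilises at some $p\leq L$, so that $g_{a+p}\in H_p$. Since $\alpha^\vee$ shifts indices by~$1$, this already makes $H_p$ invariant under $\alpha^\vee$, and induction gives $g_{a+k}\in H_p$ for all $k\geq 0$; expanding and evaluating at $v$ produces linear expressions $s_{a+k}=\sum_{i=0}^{p-1}c_{k,i}\,s_{a+i}$, from which $L$ consecutive zeros at positions $a,\dotsc,a+L-1$ force $s_m=0$ for every $m\geq a$. Dually, the descending chain $U_k:=\langle g_k,g_{k+1},\dotsc\rangle$ stabilises at some $K\leq L$, so $g_K=\sum_{i=1}^d c_i\,g_{K+i}$ for some $c_i\in S$ and $d\geq 1$; applying $(\alpha^\vee)^j$ shifts all indices by $j$, so $g_k=\sum_{i=1}^d c_i\,g_{k+i}$ for every $k\geq K$. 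Evaluating this at any element of $M$ then gives the backward recurrence on the corresponding scalar sequence, with the same coefficients $c_i$.

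To conclude, suppose for contradiction that $s_{j+1}=\dotsb=s_{j+L+1}=0$ for some $j\geq 0$. The case $j=0$ is immediate from $s_1\neq 0$, so assume $j\geq 1$. By the forward propagation above, $s_m=0$ for every $m\geq j+1$. Pseudoregularity yields $\im\alpha=\im\alpha^K$ (by iterating $\im\alpha=\im\alpha^2$; trivially when $K=0$), so there is $w\in M$ with $\alpha(v)=\alpha^K(w)$; set $t_m:=f(\alpha^m(w))$. Applying $\alpha^{n-1}$ to $\alpha(v)=\alpha^K(w)$ gives $\alpha^n(v)=\alpha^{K+n-1}(w)$, hence $s_n=t_{K+n-1}$ for $n\geq 1$, so $t_m=0$ for every $m\geq K+j$. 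The backward recurrence applies verbatim to $(t_m)$ with the same $c_i$, and iterating it downward from $m=K+j-1$ to $m=K$ (each step uses only later values that are already known to vanish) yields $t_K=0$, whence $s_1=t_K=0$---contradiction. The main obstacle is precisely this combination: the descending-chain recurrence is valid only from position~$K$ onwards, so by itself it cannot reach $s_1$ when $K\geq 1$; pseudoregularity is what allows us to replace $v$ by $w$, re-indexing the sequence so that the ``$K$'' of the recurrence lands exactly at position~$1$.
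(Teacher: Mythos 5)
Your proof is correct, and it follows a genuinely different route from the paper's. The paper restricts dual functionals to $\im\alpha$ via $\rho\colon M^\vee\to(\im\alpha)^\vee$, uses the surjectivity of $\alpha_*$ (this is where pseudoregularity enters) to pick a right inverse $\alpha^*$, and then runs a single ascending chain on the subsemimodules $M_i=\langle f_r,\dotsc,f_{r-i}\rangle$ of $\im\rho$; the shift relation $f_i=f_{i+1}\alpha^*$ pushes the stabilisation all the way down to index $0$, yielding one linear relation $f_0=\sum_{j=1}^{r}\mu_j f_j$ of \emph{bounded} depth~$r$, which is then evaluated at $\alpha(v)$. You instead work entirely with the dual endomorphism $\alpha^\vee$ on $M^\vee$ and avoid the right inverse: an ascending chain on $\langle g_a,\dotsc,g_{a+n-1}\rangle$ gives forward zero-propagation, and a descending chain on $\langle g_k,g_{k+1},\dotsc\rangle$ gives a backward recurrence of a priori \emph{unbounded} depth~$d$ valid only from index~$K$ onwards; the unbounded depth is harmless precisely because the forward step first manufactures a full tail of zeros, and pseudoregularity enters only through the re-indexing $\alpha(v)=\alpha^K(w)$ that slides the backward recurrence down to $s_1$. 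Your version trades a single chain argument plus a chosen right inverse (the paper needs a footnote to sidestep the Axiom of Choice there) for two chain arguments and no choice at all, which is a reasonable exchange and arguably cleaner in the general semimodule setting. One small wording slip: for $K=0$ the equality $\im\alpha=\im\alpha^0=M$ is not ``trivial''---it would assert surjectivity of $\alpha$, which pseudoregularity does not give---but the fact you actually use, namely $\alpha(v)\in\im\alpha^K$, does hold trivially for $K=0$ with $w=\alpha(v)$, so the argument is unaffected.
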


\begin{proof}
  We prove this statement in three steps.
  \begin{enumerate}
  \item As before, we define~$\alpha_* \colon \im \alpha \to \im
    \alpha$ for every~$u \in \im \alpha$ by~$\alpha_*(u) = \alpha(u)$.
    Since $\alpha_*$~is surjective, we can find a right
    inverse~$\alpha^* \colon \im \alpha \to \im \alpha$ such
    that~$\alpha_*\alpha^* = \mathord{\id_{(\im
        \alpha)}}$.\footnote{In the most general setting, finding a
      right inverse of a surjective function requires the Axiom of
      Choice.  In all cases of interest to us, this is not necessary.
      If $S$~is \textsc{Artinian}, then $\alpha_*$~is bijective, so
      there is a unique both-sided linear inverse.  If $\im \alpha$~is
      free of finite rank, then it suffices to choose finitely many
      preimages for the free generators of~$\im \alpha$.}  Next, we
    define~$\rho$ to be the map that sends each element~$g \colon M
    \to S$ of~$M^\vee$ to its restriction~$g \restrict{\im \alpha}$
    to~$\im \alpha$; i.e., 
    \[ \rho \colon M^\vee \to (\im \alpha)^\vee \qquad \text{with}
      \qquad \rho(g) = g \restrict{\im \alpha} \] for
    all~$g \in M^\vee = \Hom(M, S)$.  Clearly, $\rho$~is linear, so
    $\im \rho$~has finite length by Lemma~\ref{finitemap}.  Fix
    some~$n_0 \in \NN$ and let~$f_i = \rho(f\alpha^{n_0 + i})$ for
    every~$i \in \N$.  Then
    \[ f_i = \rho(f\alpha^{n_0 + i}) = \rho(f\alpha^{n_0 +
        i})\mathord{\id_M} = \rho(f\alpha^{n_0 + i}) \alpha_*\alpha^*
      = \rho(f\alpha^{n_0 + i + 1}) \alpha^* = f_{i+1}\alpha^* \]
    for every~$i \in \N$.
  \item Let~$r = \ell (\im \rho) + 1$ and~$M_i = \angles{\{\seq
      fr{r-i}\}}$ be the subsemimodule of~$\im \rho$ that is generated
    by~$\{\seq fr{r-i}\}$ for every~$0 \leq i \leq r$.  We consider the
    chain~$M_0 \preceq M_1 \preceq \ldots \preceq M_r$.  Since~$r >
    \ell(\im \rho)$, at least one of these inclusions is not proper.
    Let~$0 < i \leq r$.  If~$M_{i-1} = M_i$, then~$M_i = M_{i+1}$, which we
    prove as follows.  Since~$M_i = M_{i-1}$,  there exist
    coefficients~$\seq \lambda0r \in S$ such that
    \[ f_{r-i} = \sum_{j = 0}^{i-1} \lambda_j f_{r-j} \]
    and thus
    \[ f_{r-(i+1)} = f_{r-i-1} = f_{r-i}\alpha^* = \Bigl(\sum^{i-1}_{j
        = 0} \lambda_j f_{r-j} \Bigr) \alpha^* = \sum_{j=0}^{i-1}
      \lambda_j f_{r-j-1} = \sum_{j=1}^i 
      \lambda_{j-1} f_{r-j}. \]
    Therefore, $f_{r-(i+1)} \in \angles{\{\seq f{r-1}{r-i}\}} \preceq
    M_i$, so we have~$M_{i+1} = M_i$ by the construction of~$M_i$.  A
    straightforward induction then proves that~$M_r = M_{r-1}$.  Hence,
    there are coefficients~$\seq \mu1r \in S$ such that
    \begin{equation}
      \label{eq:3}
      f_0 = \sum_{j = 1}^{r} \mu_j f_j. \tag{$\dagger$}
    \end{equation} 
  \item Finally, let~$s_1 \neq 0$.  Assume by way of contradiction
    that there are only finitely many~$k \in \N$ such that~$s_k \neq
    0$.  Then there is some~$n \in \N$ such that~$s_n \neq 0$ and~$s_k
    = 0$ for all~$k > n$.  In particular,~$s_{n + 1} = \dotsb =
    s_{n+r} = 0$.  Set~$n_0 = n-1$ and define~$f_i$ as above.  Then
    \[ s_n = f_0 \bigl(\alpha(v) \bigr) = \Bigl(\sum_{j = 1}^r \mu_j
      f_j \Bigr) \bigl(\alpha(v) \bigr) = \sum_{j = 1}^r \mu_j f_j
      \bigl(\alpha(v) \bigr) = \sum_{j = 1}^r \mu_j s_{n+j} = 0\]
    by~\eqref{eq:3}, which contradicts the choice of~$n$.  Therefore,
    there must be infinitely many~$k \in \N$ such that~$s_k \neq 0$.
    In particular, we have shown that at most~$r - 1 = \ell (\im \rho)
    \leq \ell (M^\vee)$ values of~$s_k$ can vanish in a row. \qedhere
  \end{enumerate}
\end{proof}

We note that the previous proof relies crucially on the commutativity
of~$S$, since $M^\vee$~need not be a semimodule in the non-commutative
case.  Semimodules of finite length allow us to determine that
an endomorphism is pseudoregular simply by looking at its
factorizations.  We will later use a statement of this kind for the
proof of our pumping lemma.  However, one similar proposition can already be
adapted directly from the theory of vector spaces without any further work.

\begin{lem}[see \protect{\cite[Proposition 2.1]{JAC80}}]
  \label{pseudopower}
  Let $M$~be a finite-length semimodule and $\alpha \in \End(M)$.
  Then $\alpha^{\ell(M)}$~is pseudoregular. 
\end{lem}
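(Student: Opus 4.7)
The plan is to prove the equivalent condition $\im \alpha^n = \im \alpha^{2n}$ where $n = \ell(M)$, which by condition~(i) of Lemma~\ref{pseudocharacterization} establishes pseudoregularity of $\alpha^n$.

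First, I would consider the descending chain $\im \alpha^0 \succeq \im \alpha^1 \succeq \dotsb \succeq \im \alpha^{n+1}$ of subsemimodules of~$M$; this chain is well-defined because $\alpha^{k+1}(v) = \alpha^k(\alpha(v))$ shows $\im \alpha^{k+1} \preceq \im \alpha^k$. If all $n+1$ inclusions in this chain were strict, reversing it would yield a strictly increasing chain $M_0 \prec \dotsb \prec M_{n+1}$, contradicting the definition of $\ell(M) = n$. Hence by pigeonhole there exists some $0 \leq i \leq n$ with $\im \alpha^i = \im \alpha^{i+1}$.

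Next, I would argue that once such an equality occurs, the chain stabilizes permanently. Indeed, if $\im \alpha^j = \im \alpha^{j+1}$, then applying $\alpha$ to both sides yields $\im \alpha^{j+1} = \alpha(\im \alpha^j) = \alpha(\im \alpha^{j+1}) = \im \alpha^{j+2}$, and a straightforward induction then gives $\im \alpha^k = \im \alpha^i$ for every $k \geq i$. Since $i \leq n \leq 2n$, we conclude $\im \alpha^n = \im \alpha^i = \im \alpha^{2n}$, which is precisely condition~(i) of Lemma~\ref{pseudocharacterization} for $\alpha^n$.

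The argument is essentially a pigeonhole / stabilization argument and I do not expect any step to be a real obstacle. The only mild subtlety is noticing that the definition of $\ell(M)$, although phrased via strictly \emph{increasing} chains, equally bounds strictly \emph{decreasing} chains by reversal. Notably, this proof uses only that $M$ itself has finite length; the \textsc{Artinian} hypothesis on $S$ is not needed, so the lemma as stated is sharp in that regard.
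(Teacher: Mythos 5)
Your proof is correct and follows essentially the same approach as the paper: consider the descending chain of images $\im \alpha^0 \succeq \im \alpha^1 \succeq \dotsb$, use finite length to force a repeated inclusion, and then propagate the equality forward to conclude $\im \alpha^n = \im \alpha^{2n}$. The only minor difference is that you extend the chain one step further to $\im \alpha^{n+1}$ whereas the paper instead appends $\0$ at the end; your choice is a small simplification because it avoids the paper's separate case for when only the final inclusion $\im \alpha^{\ell(M)} \succeq \0$ is improper.
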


\begin{proof}
  Consider the chain
  \[ M = \im \alpha^0 \succeq \im \alpha^1 \succeq \im \alpha^2 \succeq
    \ldots \succeq \im \alpha^{\ell(M)} \succeq 0. \] 
  By definition at least one of these inclusions is not
  proper.  Let~$0 < i \leq \ell(M)$.  If~$\im \alpha^{i-1} = \im
  \alpha^{i}$, then indeed also~$\im \alpha^{i} = \im \alpha^{i+1}$,
  so by another straightforward induction we also obtain~$\im
  \alpha^{\ell(M)} = \im \alpha^{2\ell(M)}$, which yields that
  $\alpha^{\ell(M)}$~is pseudoregular.  If only the last inclusion is
  improper (i.e., $\im \alpha^{\ell(M)} = \0$), then
  $\alpha^{\ell(M)}$~is the zero morphism and thereby trivially
  pseudoregular as well.  This concludes all cases and in each case
  $\alpha^{\ell(M)}$~is pseudoregular.
\end{proof}

\section{Pumping Lemmata}
\label{sec:pumping}
In this final section, we combine our derived results to
provide a pumping lemma for recognizable weighted languages.  In
general, pumping lemmata are used to prove that a (weighted) language
is not recognizable.  For illustration, we recall the classical
pumping lemma for recognizable languages, which is the main tool to
prove that a given language is not recognizable~\cite{yu97}.

\begin{thm}[see~\protect{\cite[Lemma~2]{PUMP}}]
  \label{thm:classic}
  Let $L$~be a recognizable language.  Then there exists~$n \in \N$
  such that for every~$w \in L$ with~$\abs w \geq n$ there is a
  factorization~$w = uxv$ with~$x \neq \varepsilon$ such that $ux^k v
  \in L$ for all~$k \in \N$. 
\end{thm}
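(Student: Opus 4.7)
My plan is to invoke the linear representation of the characteristic function $L \colon \Sigma^* \to \BB$ of the recognizable language and apply a pigeonhole argument to its prefix vectors. First I would fix a linear representation $(Q, \inn, \out, \mu)$ of $L$ over the \textsc{Boolean} semiring, and set $n = 2^{\abs Q} + 1$. The point is that the dual $(\BB^Q)^\vee \cong \BB^Q$ is finite with exactly $2^{\abs Q}$ elements, so any sequence of more than $2^{\abs Q}$ such vectors must contain a repetition.

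Given any $w = a_1 \dotsm a_m \in L$ with $m \geq n$, I would form the prefix vectors $v_i = \inn \cdot \mu(a_1 \dotsm a_i) \in (\BB^Q)^\vee$ for $0 \leq i \leq m$. Since there are $m + 1 > 2^{\abs Q}$ of them, pigeonhole yields indices $0 \leq i < j \leq m$ with $v_i = v_j$. Setting $u = a_1 \dotsm a_i$, $x = a_{i+1} \dotsm a_j$, and $v = a_{j+1} \dotsm a_m$ then produces the desired factorization $w = uxv$ with $x \neq \varepsilon$, together with the key identity $\inn \cdot \mu(u) = \inn \cdot \mu(u) \cdot \mu(x)$.

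Finally, a routine induction on $k$ upgrades this identity to $\inn \cdot \mu(u) \cdot \mu(x)^k = \inn \cdot \mu(u)$ for every $k \in \N$, so
\[ L_{ux^kv} = \inn \cdot \mu(u) \cdot \mu(x)^k \cdot \mu(v) \cdot \out = \inn \cdot \mu(u) \cdot \mu(v) \cdot \out = L_{uxv} = L_w \neq 0, \]
whence $ux^kv \in L$ for all $k \in \N$. There is no real obstacle here: the argument is simply the classical proof recast in the paper's formalism of linear representations. The only minor choice is the constant $n$, and one could equally well pass to a deterministic subset-construction automaton on at most $2^{\abs Q}$ states to motivate the bound, but this does not affect the logic of the proof.
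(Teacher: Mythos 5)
The paper does not prove Theorem~\ref{thm:classic}; it recalls it as the classical Rabin--Scott pumping lemma and simply cites it, so there is no in-paper proof to compare against. Your argument is correct and is essentially the textbook pigeonhole proof translated into the paper's linear-representation formalism: the prefix functionals $\mathord{\inn}\cdot\mu(a_1\dotsm a_i)$ lie in the finite set $(\BB^Q)^\vee$, a repetition forces the identity $\mathord{\inn}\cdot\mu(u) = \mathord{\inn}\cdot\mu(u)\cdot\mu(x)$, and a routine induction propagates it to all powers of $\mu(x)$.

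One remark worth keeping in mind: your proof yields the strong conclusion that $ux^kv \in L$ for \emph{all} $k\in\N$, whereas the paper's general machinery --- even specialized to $S=\BB$, where $\BB^Q$ is finite and hence of finite length --- only delivers $ux^kv\in\supp L$ for \emph{infinitely many} $k$ via the Corollary of Theorem~\ref{pumping1}. The paper flags exactly this gap in strength in the paragraph preceding Theorem~\ref{pumping1}. The Boolean case supports the stronger ``for all $k$'' conclusion precisely because you get an exact equality of prefix vectors rather than merely a pseudoregular factor, something the argument of Section~\ref{sec:pumping} does not reproduce. So your proof is not a specialization of the paper's Theorem~\ref{pumping1}; it is the classical, genuinely sharper statement that the paper's general framework does not subsume but rather takes as a point of comparison.
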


Next, we show a similar result for recognizable weighted languages,
which was originally proven for fields in~\cite[Theorem~5]{JAC80},
although we adapted our proof using the ideas
of~\cite[Theorem~2]{REUT80}.  These ideas directly yield the
basic approach using our Theorem~\ref{sequence}.  Given a linear
representation~$(Q, \mathord{\inn}, \mathord{\out}, 
\mu)$ of a weighted language~$L \colon \Sigma^* \to S$ such that
(i)~$S^Q$~has finite length, (ii)~$\mu(x)$~is pseudoregular for
some~$x \in \Sigma^*$, and (iii)~$uxv \in \supp L$ for some~$u, v \in
\Sigma^*$, then for infinitely many~$k \in \N$,
\[ L(ux^kv) = \inn \cdot \mu(ux^k v) \cdot \out \neq 0. \]
We use that~$S^Q \cong (S^Q)^\vee$; i.e., that
$S^Q$~and~$(S^Q)^\vee$ are isomorphic, yielding finite length for $(S^Q)^\vee$.  Additionally, we note that we
do not conclude that~$L(ux^kv) \neq 0$ for all~$k \in \N$ (as in
Theorem~\ref{thm:classic}), but rather the inequality only holds for
infinitely many~$k \in \N$.  However, to make this approach 
applicable to any recognizable weighted language, we still need to
identify suitable conditions that enforce that a given
word~$w \in \Sigma^*$ contains a nontrivial subword~$x \in \Sigma^*$
with pseudoregular image~$\mu(x)$.  A simple combinatorial argument
following~\cite{REUT80} shows that if $w$~is long enough, then
there always exists a factorization~$w = uxy$ with~$x \neq \varepsilon$
such that $\mu(x)$~is pseudoregular.

\begin{defn}[see~\protect{\cite{REUT80}}]
  Let $\Sigma$~be a finite alphabet, $w \in \Sigma^*$, and~$n \in \N$.
  We recursively define when $w$~is a \emph{quasipower of order~$n$}.
  \begin{enumerate}
  \item If $n =
    0$~and~$w \neq \varepsilon$, then $w$ is a quasipower of order $0$.
  \item If $n > 0$~and~$w = uvu$ for
    some~$u, v \in \Sigma^*$ such that $u$~is a quasipower of
    order~$n-1$, then $w$ is a quasipower of order $n$. \qedhere
  \end{enumerate}
\end{defn}

Next, we recall that given any order~$r \in \N$ we can identify a
bound~$N_r$ such that words whose length is at least~$N_r$ necessarily
contain a quasipower of order~$r$.  Indeed the constant~$N_r$ can be
recursively defined for every~$r \in \N$ by
\[ N_0 = 1 \qquad \text{and} \qquad N_{r+1} = N_r \cdot (1 +
  \abs{\Sigma}^{N_r}). \]  

\begin{lem}[see~\protect{\cite[IV.~5]{SCHUTZ61} as cited in
    \cite[Lemma~2]{REUT80}}]
  \label{quasipower} 
  Let $\Sigma$~be a finite alphabet and~$r \in \N$.  There exists an
  integer~$N_r \in \N$ such that every word~$w \in \Sigma^*$ with
  $\abs{w} \geq N_r$ contains a subword that is a quasipower of
  order~$r$.
\end{lem}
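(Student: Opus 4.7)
The plan is to prove the lemma by induction on $r$, exploiting precisely the recursive definition $N_0 = 1$ and $N_{r+1} = N_r \cdot (1 + \abs{\Sigma}^{N_r})$ committed to in the preceding display. The base case is immediate: any word of length at least $1$ contains a single letter, which is by definition a quasipower of order~$0$, so $N_0 = 1$ suffices.

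For the induction step, I would fix a word $w \in \Sigma^*$ with $\abs{w} \geq N_{r+1}$ and consider the $1 + \abs{\Sigma}^{N_r}$ equally spaced positions $0, N_r, 2N_r, \dotsc, \abs{\Sigma}^{N_r} \cdot N_r$ in $w$. The choice of $N_{r+1}$ ensures that at each such position $i$ the length-$N_r$ factor $w_i w_{i+1} \dotsm w_{i+N_r - 1}$ is well-defined inside $w$. Since there are only $\abs{\Sigma}^{N_r}$ distinct words of length~$N_r$ over~$\Sigma$, the pigeonhole principle produces two of these positions $i < j$ at which the same length-$N_r$ factor $u'$ occurs. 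By the spacing of the positions, $j - i \geq N_r$, so the two occurrences of $u'$ are disjoint, and consequently $w$ contains a factor of the form $u' x u'$ for some $x \in \Sigma^*$ (possibly empty).

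To close the induction, I would apply the induction hypothesis to $u'$ itself, which has length exactly $N_r$, to obtain a factor $u$ of $u'$ that is a quasipower of order~$r$. Writing $u' = a u b$ for suitable $a, b \in \Sigma^*$, the factor $u' x u' = a u b x a u b$ of $w$ contains the factor $u \cdot (b x a) \cdot u$, which is by definition a quasipower of order $r + 1$. Since a subword of a subword is a subword, this quasipower is a subword of $w$.

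The only real subtlety is combinatorial bookkeeping: verifying that the number of sampled positions, the disjointness of the two chosen occurrences of~$u'$, and the length bound on $w$ all match up exactly as dictated by the recursion for~$N_{r+1}$. Since that recursion has been tailored precisely so that the pigeonhole step applies to exactly $1 + \abs{\Sigma}^{N_r}$ disjoint length-$N_r$ factors, I expect no further ideas to be required beyond tracking these indices carefully; none of the semimodule machinery developed earlier in the paper plays a role here.
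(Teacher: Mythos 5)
The paper gives no proof of this lemma, deferring entirely to the cited sources (\textsc{Sch\"utzenberger} and \textsc{Reutenauer}) and only committing to the recursion $N_0 = 1$, $N_{r+1} = N_r \cdot (1 + \abs{\Sigma}^{N_r})$ that you use. Your inductive pigeonhole argument is correct and is the standard proof of this fact: the bookkeeping does work out (the last sampled position $\abs{\Sigma}^{N_r} N_r$ leaves room for a length-$N_r$ factor precisely because $\abs{w} \geq N_{r+1}$; the two repeated occurrences of $u'$ are non-overlapping because consecutive sampled positions differ by $N_r$), and the observation that $u'xu' = aub\,x\,aub$ contains the factor $u(bxa)u$ correctly passes from the inductive hypothesis on $u'$ to a quasipower of order $r+1$ inside $w$. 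You are also right that none of the semimodule machinery is needed here; the lemma is purely combinatorial.
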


Next, still following~\cite{REUT80}, we show that quasipowers of
suitably large order are sufficient to establish the existence of a
subword~$x$ such that~$\mu(x)$ is pseudoregular.

\begin{lem}[see~\protect{\cite{JAC80} as cited
    in~\cite[Theorem~1]{REUT80}}]
  \label{quasi=pseudo}
  Let $\Sigma$~be a finite alphabet, $M$~a semimodule that has finite
  length, and~$\mu \colon \Sigma^* \to \End(M)$ a monoid homomorphism.
  Every word~$w \in \Sigma^*$ that is a quasipower of order~$r =
  \ell(M) + 1$ contains a subword~$x \neq \varepsilon$ such that
  $\mu(x)$~is pseudoregular.
\end{lem}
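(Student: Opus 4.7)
The plan is to unwind the recursive definition of a quasipower into a nested sequence of subwords, extract a chain of image subsemimodules that is one longer than $\ell(M)$ permits, and then convert the resulting coincidence into a pseudoregular witness via Lemma~\ref{pseudocharacterization}(i).

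First, I would iterate the definition to obtain words $w = w_0, w_1, \ldots, w_r$ and connectors $z_1, \ldots, z_r$ with $w_{i-1} = w_i z_i w_i$ for every $1 \leq i \leq r$, where each $w_i$ is a quasipower of order $r - i$; in particular $w_r \neq \varepsilon$, so every $w_i$ is nonempty. Setting $\alpha_i = \mu(w_i)$ and $\beta_i = \mu(z_i)$, the homomorphism property yields $\alpha_{i-1} = \alpha_i \beta_i \alpha_i$, which immediately gives $\im \alpha_{i-1} \preceq \im \alpha_i$. Hence $\im \alpha_0 \preceq \im \alpha_1 \preceq \cdots \preceq \im \alpha_r$ is a chain of $r + 1 = \ell(M) + 2$ subsemimodules of $M$, so by the definition of length there must be an index $0 \leq i < r$ with $\im \alpha_i = \im \alpha_{i+1}$. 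Writing $\alpha = \alpha_{i+1}$ and $\beta = \beta_{i+1}$, this equality unpacks to $\alpha(\beta(\im \alpha)) = \im \alpha$.

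The witness I would propose is $x = z_{i+1} w_{i+1}$, which is a suffix of $w_i = w_{i+1} z_{i+1} w_{i+1}$, hence a subword of $w$, and which is nonempty since $w_{i+1}$ is. With $\mu(x) = \beta\alpha$, the computation
\[
  \im \mu(x)^2 = \beta(\alpha(\beta(\alpha(M)))) = \beta(\alpha(\beta(\im \alpha))) = \beta(\im \alpha) = \im \mu(x)
\]
uses the collapse $\alpha(\beta(\im \alpha)) = \im \alpha$ in the middle step. By Lemma~\ref{pseudocharacterization}(i), $\mu(x)$ is pseudoregular, which completes the proof.

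The main obstacle is exactly the choice of $x$ in the last step. The seemingly more natural candidate $x = w_{i+1}$ yields $\im \mu(x)^2 = \alpha(\im \alpha)$, which cannot be compared to $\im \alpha$ using only the coincidence $\alpha(\beta(\im \alpha)) = \im \alpha$ — one would need the preimage $\beta(\im \alpha)$ to already lie in $\im \alpha$, which is a strictly stronger property. Prepending the connector $z_{i+1}$ places the full factor $\alpha_i = \alpha\beta\alpha$ into the middle of $x^2$, which is precisely what reduces the middle equality above to the already-established collapse of $\im \alpha_i$ onto $\im \alpha$.
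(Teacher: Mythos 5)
Your proof is correct and follows essentially the same route as the paper: you unwind the quasipower into the same nested decomposition (your $w_j$ is the paper's $u_{r-j}$, your $z_j$ its $v_{r-j+1}$), obtain the same chain of images, locate the same collapse, and choose the same subword $x = z_{i+1}w_{i+1}$. The only cosmetic difference is that you verify $\im\mu(x) = \im\mu(x)^2$ by direct computation (condition~(i) of Lemma~\ref{pseudocharacterization}), whereas the paper observes that the established equality $\im\mu(u_{i-1}) = \im\bigl(\mu(u_{i-1})\mu(v_i)\mu(u_{i-1})\bigr)$ is already literally condition~(ii) with $\beta = \mu(u_{i-1})$, $\gamma = \mu(v_i)$, and so invokes that clause directly.
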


\begin{proof}
  Let $w \in \Sigma^*$~be a quasipower of order~$r$, and let~$u_r =
  w$.  There are words~$\seq u0{r-1}, \seq v1n \in
  \Sigma^{\scriptscriptstyle+}$ such that~$u_i = u_{i-1} v_i u_{i-1}$
  for all~$1 \leq i \leq r$.  Thus,
  \[ \im \mu(u_i) = \im \mu(u_{i-1} v_i u_{i-1}) \preceq \im
    \mu(u_{i-1}), \]
  so we obtain the chain
  \[ \im \mu(u_r) \preceq \im \mu(u_{r-1}) \preceq \ldots \preceq \im
    \mu(u_0) \] 
  of $r+1$~subsemimodules of~$M$.  Therefore, $\im \mu(u_i) = \im
  \mu(u_{i-1})$ for some~$1 \leq i \leq r$, which yields  
  \[ \im \mu(u_{i-1}) = \im \mu(u_i) = \im \mu(u_{i-1} v_i u_{i-1}) =
    \im \bigl( \mu(u_{i-1}) \mu(v_i) \mu(u_{i-1}) \bigr). \]
  By Lemma~\ref{pseudocharacterization}(ii) we obtain that $\mu(v_i)
  \mu(u_{i-1}) = \mu(v_iu_{i-1})$~is pseudoregular.  Hence,
  we set~$x = v_iu_{i-1}$ to complete the proof.
\end{proof}

Our pumping lemma now follows directly.  The next main theorem still
contains the technical restriction that the semimodule~$S^Q$ has
finite length, where $Q$~is the set of states of a linear
representation for a given recognizable weighted language.  A slightly
more direct statement is expressed in the corollary that follows the
next theorem.

\begin{thm}[see~\protect{\cite[Theorem~5]{JAC80} as cited
    in~\cite[Theorem~2]{REUT80}}]
  \label{pumping1}  
  Let $\Sigma$~be a finite alphabet.  Moreover, let $(Q,
  \mathord{\inn}, \mathord{\out}, \mu)$ be a linear representation for
  the weighted language~$L \colon \Sigma^* \to S$.  If $S^Q$~has
  finite length, then there exists an integer~$N \in \N$ such that for
  every~$w \in \supp L$ with~$\abs{w} \geq N$ there exists a
  factorization~$w = uxv$ with~$x \neq \varepsilon$ such that
  \[ \{ux^kv \mid k \in \N\} \cap \supp L \]
  is infinite.
\end{thm}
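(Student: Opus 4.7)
The plan is to combine Lemmata~\ref{quasipower}~and~\ref{quasi=pseudo} to locate a pseudoregular subword inside any sufficiently long word of~$\supp L$, and then to invoke Theorem~\ref{sequence} to conclude that pumping that subword yields infinitely many nonzero-weighted words.  Setting $M = S^Q$, the hypothesis gives that $M$~has finite length; since $Q$~is finite we additionally have $M^\vee \cong S^Q \cong M$, so $M^\vee$~has finite length as well, which is essential for Theorem~\ref{sequence}.

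Concretely, I would set $r = \ell(M) + 1$ and take $N = N_r$ as provided by Lemma~\ref{quasipower}.  For any $w \in \supp L$ with $\abs w \geq N$, Lemma~\ref{quasipower} yields a factorization $w = u_0 y v_0$ in which $y$~is a quasipower of order~$r$, and Lemma~\ref{quasi=pseudo}, applied with the monoid homomorphism~$\mu$ and the finite-length semimodule~$M$, produces a nonempty subword~$x$ of~$y$ such that $\mu(x)$~is pseudoregular.  Writing $y = y_1 x y_2$ and setting $u = u_0 y_1$ and $v = y_2 v_0$ gives $w = uxv$ with $x \neq \varepsilon$ and $\mu(x)$~pseudoregular.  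To apply Theorem~\ref{sequence}, define $f = \inn \cdot \mu(u) \in M^\vee$ and $m = \mu(v) \cdot \out \in M$, so that
\[
L(ux^k v) = \inn \cdot \mu(u) \cdot \mu(x)^k \cdot \mu(v) \cdot \out = f\bigl(\mu(x)^k(m)\bigr)
\]
for every $k \in \N$.  In particular, the value $s_1 = f\bigl(\mu(x)(m)\bigr) = L(w)$ is nonzero because $w \in \supp L$, so Theorem~\ref{sequence} applied to $\alpha = \mu(x)$, $f$, and $m$ delivers $s_k \neq 0$ for infinitely many $k \in \N$, which is precisely the assertion that $\{ux^k v \mid k \in \N\} \cap \supp L$ is infinite.

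I do not anticipate serious obstacles, since the real work has already been assembled in the earlier results: Lemma~\ref{quasipower} secures the existence of a quasipower of arbitrary order inside a long enough word, Lemma~\ref{quasi=pseudo} promotes such a quasipower to a pseudoregular subword, and Theorem~\ref{sequence} then guarantees infinitely many nonzero terms.  The only mildly delicate point is checking that $M^\vee$ has finite length, but this follows immediately from the identification $(S^Q)^\vee \cong S^Q$ recorded in the preliminaries for finite~$Q$.
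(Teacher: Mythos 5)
Your proof is correct and takes essentially the same route as the paper: with $M = S^Q$ you choose $N = N_r$ for $r = \ell(M)+1$, locate a nonempty subword $x$ with pseudoregular $\mu(x)$ via Lemmata~\ref{quasipower} and~\ref{quasi=pseudo}, and then apply Theorem~\ref{sequence} with $f = \inn\cdot\mu(u)$, $m = \mu(v)\cdot\out$, using $(S^Q)^\vee \cong S^Q$ to satisfy the finite-length hypothesis. The only difference is that you spell out the nested factorization and the identification $s_1 = L(w) \neq 0$ explicitly, which the paper leaves implicit.
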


\begin{proof}
  Let~$r = \ell(M) + 1$ and~$N = N_r$ as in Lemma~\ref{quasipower}.
  Since~$\abs w \geq N$, the word~$w$ contains a quasipower of
  order~$r$ by Lemma~\ref{quasipower}, and by Lemma~\ref{quasi=pseudo}
  there exists a factorization~$w = uxv$ such that
  $x \neq \varepsilon$ and $\mu(x)$~is pseudoregular.  Moreover,
  $\mathord{\inn} \cdot \mu(u) \in
  (S^Q)^\vee$~and~$\mu(v) \cdot \mathord{\out} \in S^Q$.  By
  assumption we have
  \[ L_w = \mathord{\inn} \cdot \mu(u) \mu(x) \mu(v) \cdot
    \mathord{\out} \neq 0. \] 
  Since~$S^Q \cong (S^Q)^\vee$ and $S^Q$~has finite length, we can
  apply Theorem~\ref{sequence} to obtain that for infinitely many~$k
  \in \N$,
  \[ \bigl(\inn \cdot \mu(u) \bigr) \cdot \mu(x)^k \cdot
    \bigl(\mu(v) \cdot \out \bigr) \neq 0. \]
  Since~$\mu(x)^k = \mu(x^k)$, this completes the proof.
\end{proof}

By extending this theorem from its original statement for fields to
more general semirings, we have identified a unified framework for the
classical pumping lemma by \textsc{Rabin} and
\textsc{Scott}~\cite{PUMP} (for the \textsc{Boolean} semifield) and
the pumping lemma for recognizable weighted languages over fields by
\textsc{Jacob}~\cite{JAC80}.  In practice, it is useful to be able to
reason about recognizability without knowing the number of states a
potential linear representation might have, which makes the
requirement that $S^Q$~has finite length troublesome.  This can be
remedied by requiring our semiring~$S$ to be \textsc{Artinian},
which of course still subsumes all the cases covered by the already
mentioned pumping lemmata.

\begin{cor}[of~\protect{Theorem~\ref{pumping1}}]
  \label{cor:pump}  
  Let $\Sigma$~be a finite alphabet, $S$~be an {\upshape
    \textsc{Artinian}} semiring, and $L$ be a recognizable weighted
  language~$L \colon \Sigma^* \to S$.  Then there exists an integer~$N
  \in \N$ such that for every~$w \in \supp L$ with~$\abs{w} \geq N$
  there exists a factorization~$w = uxv$ with~$x \neq \varepsilon$ such that 
  \[ \{ux^kv \mid k \in \N\} \cap \supp L \]
  is infinite.
\end{cor}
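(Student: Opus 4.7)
The plan is to reduce directly to Theorem~\ref{pumping1}, with the Artinian hypothesis only serving to supply the finite-length condition on the state semimodule. First I would unfold the definition of recognizability: since $L$ is recognizable, there exists a linear representation $(Q, \mathord{\inn}, \mathord{\out}, \mu)$ of $L$, and by definition $Q$ is a finite set of states.

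Next I would observe that the free semimodule $S^Q$ is finitely generated, since the family $\{\iota_q \mid q \in Q\}$ is finite and generates $S^Q$ (any $f \in S^Q$ has cofinite kernel, but because $Q$ itself is finite, $f = \sum_{q \in Q} f(q)\,\iota_q$, so the generation is immediate). Invoking the hypothesis that $S$ is \textsc{Artinian}, every finitely generated semimodule over $S$ has finite length, so $\ell(S^Q) \in \N$.

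At this point the hypothesis of Theorem~\ref{pumping1} is satisfied verbatim, so I would simply apply it to obtain the integer $N \in \N$ and the factorization $w = uxv$ with $x \neq \varepsilon$ such that $\{ux^kv \mid k \in \N\} \cap \supp L$ is infinite, which is exactly the conclusion. There is essentially no obstacle here: the work has all been absorbed into Theorem~\ref{pumping1} and the definition of \textsc{Artinian}, and the only thing to verify is the trivial observation that $S^Q$ is finitely generated whenever $Q$ is finite.
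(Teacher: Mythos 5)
Your argument is correct and is precisely the intended deduction: recognizability gives a linear representation with finite state set $Q$, the free semimodule $S^Q$ is finitely generated (hence of finite length over an \textsc{Artinian} semiring), and Theorem~\ref{pumping1} then applies verbatim. The paper leaves this proof implicit, but the surrounding discussion makes clear that this is exactly the route envisioned.
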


\begin{exmp}
  Directly generalizing a classical example of a non-regular language,
  there is no recognizable weighted language~$L$ over an
  \textsc{Artinian} semiring such
  that~$\supp L = \braces{a^n b^n \mid n \in \N}$.  Suppose that there
  is an \textsc{Artinian} semiring~$S$ and a recognizable weighted
  language~$L \colon \{a,b\}^* \to S$ such
  that~$\supp L = \braces{a^n b^n \mid n \in \N}$.  By the Corollary
  of Theorem~\ref{pumping1} there exists~$N \in \N$ such that
  $w = a^Nb^N$ admits a decomposition~$w = uxv$
  with~$x \neq \varepsilon$ such
  that~$\{ux^kv \mid k \in \N\} \cap \supp L$ is infinite.  Obviously
  this is a contradiction since no suitable
  subword~$x \neq \varepsilon$ (consider the cases $x = a^m$,
  $x = b^m$, and~$x = a^m b^n$) exists.  We note that such a
  recognizable weighted language~$L$ over a
  \textit{non-commutative} semiring exists.
\end{exmp}

If we drop the assumption that the alphabet~$\Sigma$ is finite, then
we obtain a notion of recognizable weighted languages that is useful
when applying the same pumping techniques to weighted tree languages
(see, for example,~\cite[Theorem~9.2]{BERSTEL}).  One result that
would be an ideal candidate for extension to semimodules is recalled
next.  Its extension would immediately yield pumping lemmata of
various forms (e.g.~\cite[Theorem~4]{REUT80}).

\begin{thm}[see~\protect{\cite[Theorem 3]{REUT80}}] 
  Let $\Sigma$~be a (not necessarily finite) alphabet and $V$~a vector
  space of finite nonzero dimension.  There is an integer~$N$ such
  that for each homomorphism~$\mu \colon \Sigma^* \to \End(V)$, every
  word~$w \in \Sigma^*$ with~$\abs{w} \geq N$ contains a subword~$x
  \neq \varepsilon$ such that $\mu(x)$~is pseudoregular.
\end{thm}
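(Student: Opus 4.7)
The plan is to proceed by induction on~$n = \dim V$. The base case~$n = 1$ is immediate: every endomorphism of a one"~dimensional vector space is either the zero map or invertible, and both satisfy~$\im \alpha = \im \alpha^2$; hence both are pseudoregular, and $N = 1$ suffices.

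For the inductive step, suppose the statement holds with bound~$N_{n-1}$ for every vector space of dimension strictly less than~$n$. Given~$w = \sigma_1 \dotsm \sigma_m$ with $m$ to be specified, set~$p_i = \sigma_1 \dotsm \sigma_i$ and examine the descending chain
\[ V = \im \mu(p_0) \succeq \im \mu(p_1) \succeq \dotsb \succeq \im \mu(p_m). \]
Since $\dim V = n$, this chain has at most~$n+1$ distinct values, so the positions~$\{0, 1, \dotsc, m\}$ partition into at most~$n+1$ maximal plateau intervals of constant prefix image.  Choosing $m$ sufficiently large (in terms of~$N_{n-1}$), some plateau~$[a, b]$ has length~$b - a$ exceeding any prescribed threshold.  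Let~$W = \im \mu(p_a) = \dotsb = \im \mu(p_b)$.

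If~$W = V$, then $\mu(p_a)$ and $\mu(p_b)$ are surjective, and hence bijective by Theorem~\ref{inj=surj} applied to the vector space~$V$ (which has finite length~$n$). Taking~$x = \sigma_{a+1} \dotsm \sigma_b$, the relation $\mu(p_b) = \mu(p_a) \mu(x)$ yields $\mu(x) = \mu(p_a)^{-1} \mu(p_b)$, which is bijective and therefore pseudoregular.

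The remaining case $W \subsetneq V$ with $\dim W < n$ is the main obstacle, since $W$ has smaller dimension yet $\mu$ continues to take values in~$\End(V)$.  The strategy is to refine the plateau argument: within~$[a, b]$, the chain $\im \mu(\sigma_{a+1} \dotsm \sigma_j)$ for $j \in [a, b]$ is again descending in~$V$ with at most~$n+1$ distinct values, yielding a sub"~plateau $[a', b']$ on which further invariants are constant.  Iterating this refinement sufficiently often produces an inner sub"~interval on which the accumulated constraints on the images and kernels of the relevant subword maps, combined with the equivalent characterisation of pseudoregularity as $V = \im \alpha \oplus \ker \alpha$ available for endomorphisms of finite"~dimensional vector spaces, force some subword $x \neq \varepsilon$ of~$w$ to have $\mu(x)$ pseudoregular.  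The resulting bound~$N_n$ depends only on~$n = \dim V$, not on~$\Sigma$ or~$\mu$, and grows roughly exponentially in~$n$.
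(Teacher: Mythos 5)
The theorem you are proving is one that the paper itself does \emph{not} prove: it is stated with a citation to \textsc{Reutenauer}'s Theorem~3, and the surrounding text explicitly remarks that the proof ``uses the relationship of nonvanishing elements of exterior powers of~$V$ to their components' linear independence'' and that this technique does not transfer to general (semi)rings. So there is no paper proof to compare against; the relevant comparison is between your plateau/induction-on-dimension strategy and the cited exterior-power argument, and unfortunately your strategy has a genuine gap.

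Your base case and the $W = V$ branch of the inductive step are sound. If the long plateau $[a,b]$ has $\im \mu(p_a) = V$, then $\mu(p_a)$ and $\mu(p_b)$ are surjective, hence bijective (over a field), and $\mu(x) = \mu(p_a)^{-1}\mu(p_b)$ is an isomorphism, hence pseudoregular. That part is correct and tightly argued.

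The problem is the branch $W \subsetneq V$, which you yourself flag as ``the main obstacle.'' The refinement you describe does not close the induction. On the plateau $[a,b]$ you only know that $\mu(p_a)\mu(\sigma_{a+1}\dotsm\sigma_j)$ has image $W$ for all $j$, which constrains $\im \mu(\sigma_{a+1}\dotsm\sigma_j)$ modulo $\ker\mu(p_a)$ but places no constraint at all on the individual letters $\mu(\sigma_j)$: they need not stabilize $W$, need not have bounded rank, and can all be invertible yet pairwise distinct. Consequently there is no subspace on which the restricted maps form an $\End(W)$-valued homomorphism to which the inductive hypothesis could be applied, and the phrase ``iterating this refinement sufficiently often \ldots force some subword $x \neq \varepsilon$'' is not backed by any concrete invariant or termination argument. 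The appeal to the characterization $V = \im\alpha \oplus \ker\alpha$ does not help either, because nothing in the plateau data exhibits a candidate $x$ whose image and kernel are complementary. As written, the inductive step is undefined whenever the image chain drops below $V$, so the bound $N_n$ is not actually produced.

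The reason this route is hard is precisely the reason the paper singles out the exterior-power technique: over an infinite alphabet one cannot force a repeated factor $uvu$ by pigeonhole, so one cannot reach the $\im\beta = \im(\beta\gamma\beta)$ criterion of Lemma~\ref{pseudocharacterization}(ii) combinatorially. \textsc{Reutenauer}'s proof instead passes to $\Lambda^k V$ and uses that a nonvanishing $k$-vector encodes linear independence of its components, which gives a numerical handle on rank stabilization that survives an infinite alphabet. Your proposal does not engage with any mechanism of that kind, so the key difficulty of the theorem remains unaddressed.
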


Unfortunately, the proof of this theorem uses the relationship of
nonvanishing elements of exterior powers of~$V$ to their components'
linear independence.  This cannot be easily extended even to
(non-integral) rings.  We conclude this section by stating two weak
pumping lemmata for recognizable weighted languages over infinite
alphabets.

\begin{thm}
  Let $\Sigma$~be a (possibly infinite) alphabet and
  $L \colon \Sigma^* \to S$ be a recognizable weighted language with
  linear representation~$(Q, \mathord{\inn}, \mathord{\out}, \mu)$
  such that $S^Q$~has finite length~$N = \ell(S^Q)$.  If there
  exists~$w \in \supp L$ with~$w = ab^N c$, then the set~$\{ab^k c
  \mid k \in \N\} \cap \supp L$ is infinite.
\end{thm}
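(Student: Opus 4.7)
The plan is to reduce to Theorem~\ref{sequence} by exhibiting a pseudoregular endomorphism extracted from $\mu(b)$. Concretely, I will show that $L(ab^{kN}c) \neq 0$ for infinitely many $k \in \N$, which immediately implies that $\{ab^{k}c \mid k \in \N\} \cap \supp L$ is infinite (assuming $b \neq \varepsilon$, as the statement implicitly requires to be nontrivial). The main conceptual obstacle is that $\mu(b)$ itself need not be pseudoregular, so Theorem~\ref{sequence} does not apply directly to $\alpha = \mu(b)$; I sidestep this by sparsifying the relevant sequence in steps of $N$.

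Set $\alpha = \mu(b) \in \End(S^Q)$, $v = \mu(c) \cdot \mathord{\out} \in S^Q$, and $f = \mathord{\inn} \cdot \mu(a) \in (S^Q)^\vee$. For every $k \in \N$ we then have
\[ L(ab^k c) = \mathord{\inn} \cdot \mu(a) \cdot \mu(b)^k \cdot \mu(c) \cdot \mathord{\out} = f \bigl( \alpha^k(v) \bigr), \]
and in particular $f(\alpha^N(v)) = L(w) \neq 0$ because $w \in \supp L$. Since $S^Q$ has length $N$ by assumption, Lemma~\ref{pseudopower} applied to $\alpha \in \End(S^Q)$ yields that $\beta = \alpha^N$ is pseudoregular.

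Now consider the sparsified sequence $s_k = f \bigl(\beta^k(v) \bigr) = f \bigl(\alpha^{kN}(v) \bigr) = L(ab^{kN}c)$. By construction $s_1 = f(\alpha^N(v)) = L(w) \neq 0$. Because $Q$ is finite, the isomorphism $(S^Q)^\vee \cong S^Q$ recalled in Section~\ref{sec:prelim} transfers finite length to the dual, so $\ell((S^Q)^\vee) = N$ as well. All hypotheses of Theorem~\ref{sequence} are thus met for the pseudoregular endomorphism $\beta$, the dual element $f$, and the semimodule element $v$; it follows that $s_k \neq 0$ for infinitely many $k \in \N$, i.e., $ab^{kN}c \in \supp L$ for infinitely many $k$, as required.

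The only piece that requires a moment of thought is the passage from $\mu(b)$ to $\beta = \mu(b)^N$: once Lemma~\ref{pseudopower} gives pseudoregularity of $\beta$, the quasipower combinatorics of Theorem~\ref{pumping1} are unnecessary, since the word $w = ab^Nc$ already displays the iterated block $b^N$ explicitly. Everything else is a direct bookkeeping exercise, amounting to verifying that $L(ab^k c)$ has the shape $f(\alpha^k(v))$ so that the sparsified subsequence is indexed correctly.
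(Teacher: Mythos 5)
Your proof is correct and follows essentially the same route as the paper's: apply Lemma~\ref{pseudopower} to conclude that $\mu(b)^N$ is pseudoregular, then feed $\alpha = \mu(b)^N$, $f = \mathord{\inn}\cdot\mu(a)$, $v = \mu(c)\cdot\mathord{\out}$ into Theorem~\ref{sequence} exactly as in the proof of Theorem~\ref{pumping1}. The remark that $b\neq\varepsilon$ is implicitly required is a fair observation, but it does not change the argument.
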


\begin{proof}
  By Lemma~\ref{pseudopower}, $\mu(b^N) = \mu(b)^N$~is pseudoregular.
  Then the claim follows exactly as in Theorem~\ref{pumping1}.
\end{proof}

\begin{thm}
  Let the semiring~$S$ be finite, $\Sigma$~a (possibly
  infinite) alphabet, and $L \colon \Sigma^* \to S$ be a recognizable
  weighted language with linear
  representation~$(Q, \mathord{\inn}, \mathord{\out}, \mu)$.  There is
  an integer~$N$ such that for every~$w \in \supp L$
  with~$\abs w \geq N$ there exists a factorization~$w = uxv$
  with~$x \neq \varepsilon$ such that
  \[ \{ux^kv \mid k \in \N\} \cap \supp L \]
  is infinite.
\end{thm}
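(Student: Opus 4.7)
The plan is to reduce to the finite-alphabet situation of Theorem~\ref{pumping1} by exploiting the fact that the ``effective alphabet'' visible to $\mu$ is automatically finite. Since both $S$ and $Q$ are finite, the endomorphism space $\End(S^Q) \cong S^{Q \times Q}$ is itself a finite set, and consequently $T = \mu(\Sigma) \subseteq \End(S^Q)$ is finite even though $\Sigma$ need not be. I would factor $\mu$ as $\mu = \hat\mu \circ \phi$, where $\phi \colon \Sigma^* \to T^*$ is the length-preserving monoid homomorphism sending each letter $a \in \Sigma$ to $\mu(a) \in T$, and $\hat\mu \colon T^* \to \End(S^Q)$ is the unique monoid homomorphism that is the identity on $T$.

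Setting $r = \ell(S^Q) + 1$ (finite since $S^Q$ is finite), I let $N = N_r$ be the quasipower bound from Lemma~\ref{quasipower} applied to the finite alphabet $T$. Given $w \in \supp L$ with $\abs w \geq N$, the word $\phi(w) \in T^*$ has the same length and therefore contains a subword that is a quasipower of order $r$; by Lemma~\ref{quasi=pseudo} applied to $\hat\mu$, there is a factorization $\phi(w) = \tilde u\, \tilde x\, \tilde v$ with $\tilde x \neq \varepsilon$ such that $\hat\mu(\tilde x)$ is pseudoregular. Because $\phi$ acts letter by letter, this lifts uniquely to a factorization $w = uxv$ with $\abs u = \abs{\tilde u}$, $\abs x = \abs{\tilde x}$, and $\abs v = \abs{\tilde v}$, whence $\mu(x) = \hat\mu(\phi(x)) = \hat\mu(\tilde x)$ is pseudoregular and $x \neq \varepsilon$.

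The conclusion then follows exactly as in the proof of Theorem~\ref{pumping1}: the isomorphism $(S^Q)^\vee \cong S^Q$ shows that the dual is finite and thus of finite length, so Theorem~\ref{sequence} applied to $\alpha = \mu(x)$, the linear form $f = \inn \cdot \mu(u) \in (S^Q)^\vee$, and the vector $\mu(v) \cdot \out \in S^Q$ (noting $s_1 = L_w \neq 0$) yields $L(ux^kv) = \inn \cdot \mu(u) \cdot \mu(x)^k \cdot \mu(v) \cdot \out \neq 0$ for infinitely many $k \in \N$. The main obstacle is that Lemma~\ref{quasipower} genuinely requires a finite alphabet through its recursion $N_{r+1} = N_r \cdot (1 + \abs\Sigma^{N_r})$; the passage to the finite alphabet $T = \mu(\Sigma)$ is precisely what circumvents this, and that reduction is possible only because $S$ is assumed finite.
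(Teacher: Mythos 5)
Your proof is correct and takes essentially the same approach as the paper: exploit the finiteness of $\End(S^Q)$ to reduce to a finite effective alphabet, then invoke the finite-alphabet pumping argument. The only cosmetic difference is that the paper passes to a set of representatives $\Gamma \subseteq \Sigma$ for the equivalence $a \sim b \Leftrightarrow \mu(a) = \mu(b)$, whereas you work directly with the image alphabet $T = \mu(\Sigma) \subseteq \End(S^Q)$ and the factorization $\mu = \hat\mu \circ \phi$; these are interchangeable, and your version has the mild advantage of avoiding a choice of representatives and of spelling out explicitly how the factorization of $\phi(w)$ lifts letter-by-letter to one of~$w$.
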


\begin{proof}
  Since $\End(S^Q)$~is finite, we can reduce to the case of
  finite alphabets.  To this end, we define the
  relation~$\mathord{\sim} = \Ker \mu$ on~$\Sigma$ (where $\Ker \mu$ is defined as in Theorem \ref{inj=surj}).  
  Clearly, $\sim$~is an equivalence relation.  From each of the finite
  number of equivalence classes~$[m]$ we choose a
  representative~$r_m$.  Now, we let~$\Gamma = \braces{r_m \mid m \in
    \Sigma}$ and extend the mapping~$m \mapsto r_m$ to the unique
  monoid homomorphism~$\psi \colon \Sigma^* \to \Gamma^*$.  By
  definition of~$\sim$, it is obvious that~$\mu(w) = \mu(\psi(w))$ for
  all~$w \in \Sigma^*$.

  Since~$\abs{\End(S^Q)} \leq \abs{S}^{\abs{Q}^2}$ (consider
  matrices), we have~$\abs{\Gamma} \leq \abs{S}^{\abs{Q}^2}$.  Let $N$
  be as in Theorem~\ref{pumping1}.  For every~$w \in \Sigma^*$
  with~$\abs{w} \geq N$. there exists a factorization~$w = uxv$
  with~$x \neq \varepsilon$ and infinite
  \[ \{\psi(ux^kv) \mid k \in \N\} \cap \supp L. \]
  By definition of~$\psi$, it is clear that this implies the
  infiniteness of the set
  \[ \{ux^kv \mid k \in \N\} \cap \supp L. \qedhere \] 
\end{proof}

\bibliographystyle{eptcs}
\bibliography{bib}

\begin{thebibliography}{10}
\providecommand{\bibitemdeclare}[2]{}
\providecommand{\surnamestart}{}
\providecommand{\surnameend}{}
\providecommand{\urlprefix}{Available at }
\providecommand{\url}[1]{\texttt{#1}}
\providecommand{\href}[2]{\texttt{#2}}
\providecommand{\urlalt}[2]{\href{#1}{#2}}
\providecommand{\doi}[1]{doi:\urlalt{https://doi.org/#1}{#1}}
\providecommand{\eprint}[1]{arXiv:\urlalt{https://arxiv.org/abs/#1}{#1}}
\providecommand{\bibinfo}[2]{#2}

\bibitemdeclare{book}{ahoull85}
\bibitem{ahoull85}
\bibinfo{author}{Alfred~V. \surnamestart Aho\surnameend}, \bibinfo{author}{Ravi
  \surnamestart Sethi\surnameend} \& \bibinfo{author}{Jeffrey~D. \surnamestart
  Ullman\surnameend} (\bibinfo{year}{1985}): \emph{\bibinfo{title}{Compilers:
  Principles, Techniques, and Tools}}.
\newblock \bibinfo{publisher}{Addison-Wesley}.

\bibitemdeclare{book}{focs}
\bibitem{focs}
\bibinfo{author}{Alfred~V. \surnamestart Aho\surnameend} \&
  \bibinfo{author}{Jeffrey~D. \surnamestart Ullman\surnameend}
  (\bibinfo{year}{1992}): \emph{\bibinfo{title}{Foundations of Computer
  Science}}.
\newblock \bibinfo{publisher}{W.H. Freeman}.

\bibitemdeclare{incollection}{albjar09}
\bibitem{albjar09}
\bibinfo{author}{J\"urgen \surnamestart Albert\surnameend} \&
  \bibinfo{author}{Jarkko \surnamestart Kari\surnameend}
  (\bibinfo{year}{2009}): \emph{\bibinfo{title}{Digital Image Compression}}.
\newblock In {\slshape \bibinfo{booktitle}{Handbook of Weighted Automata}}
  \cite{droste2009handbook}, chapter~\bibinfo{chapter}{11}, pp.
  \bibinfo{pages}{453--479}, \doi{10.1007/978-3-642-01492-5_11}.

\bibitemdeclare{book}{ALUFFI}
\bibitem{ALUFFI}
\bibinfo{author}{Paolo \surnamestart Aluffi\surnameend} (\bibinfo{year}{2009}):
  \emph{\bibinfo{title}{Algebra: Chapter 0}}.
\newblock {\slshape \bibinfo{series}{Graduate Studies in Mathematics}}
  \bibinfo{volume}{104}, \bibinfo{publisher}{American Mathematical Society},
  \doi{10.1090/gsm/104/01}.

\bibitemdeclare{book}{ATIYAH}
\bibitem{ATIYAH}
\bibinfo{author}{Michael~F. \surnamestart Atiyah\surnameend} \&
  \bibinfo{author}{Ian~G. \surnamestart MacDonald\surnameend}
  (\bibinfo{year}{1994}): \emph{\bibinfo{title}{Introduction To Commutative
  Algebra}}.
\newblock \bibinfo{series}{Addison-Wesley Series in Mathematics},
  \bibinfo{publisher}{Avalon Publishing}.

\bibitemdeclare{incollection}{autberboa97}
\bibitem{autberboa97}
\bibinfo{author}{Jean-Michel \surnamestart Autebert\surnameend},
  \bibinfo{author}{Jean \surnamestart Berstel\surnameend} \&
  \bibinfo{author}{Luc \surnamestart Boasson\surnameend}
  (\bibinfo{year}{1997}): \emph{\bibinfo{title}{Context-free Languages and
  Pushdown Automata}}.
\newblock In \bibinfo{editor}{Grzegorz \surnamestart Rozenberg\surnameend} \&
  \bibinfo{editor}{Arto \surnamestart Salomaa\surnameend}, editors: {\slshape
  \bibinfo{booktitle}{Handbook of Formal Languages}},
  chapter~\bibinfo{chapter}{3}, \bibinfo{volume}{1},
  \bibinfo{publisher}{Springer}, pp. \bibinfo{pages}{111--174},
  \doi{10.1007/978-3-642-59136-5_3}.

\bibitemdeclare{incollection}{baigrocie09}
\bibitem{baigrocie09}
\bibinfo{author}{Christel \surnamestart Baier\surnameend},
  \bibinfo{author}{Marcus~Gr\"o\ss \surnamestart er\surnameend} \&
  \bibinfo{author}{Frank \surnamestart Ciesinski\surnameend}
  (\bibinfo{year}{2009}): \emph{\bibinfo{title}{Model Checking Linear-Time
  Properties of Probabilistic Systems}}.
\newblock In {\slshape \bibinfo{booktitle}{Handbook of Weighted Automata}}
  \cite{droste2009handbook}, chapter~\bibinfo{chapter}{13}, pp.
  \bibinfo{pages}{519--570}, \doi{10.1007/978-3-642-01492-5_13}.

\bibitemdeclare{article}{BERSTEL}
\bibitem{BERSTEL}
\bibinfo{author}{Jean \surnamestart Berstel\surnameend} \&
  \bibinfo{author}{Christophe \surnamestart Reutenauer\surnameend}
  (\bibinfo{year}{1982}): \emph{\bibinfo{title}{Recognizable Formal Power
  Series on Trees}}.
\newblock {\slshape \bibinfo{journal}{Theoretical Computer Science}}
  \bibinfo{volume}{18}(\bibinfo{number}{2}), pp. \bibinfo{pages}{115--148},
  \doi{10.1016/0304-3975(82)90019-6}.

\bibitemdeclare{book}{droste2009handbook}
\bibitem{droste2009handbook}
\bibinfo{author}{Manfred \surnamestart Droste\surnameend},
  \bibinfo{author}{Werner \surnamestart Kuich\surnameend} \&
  \bibinfo{author}{Heiko \surnamestart Vogler\surnameend}
  (\bibinfo{year}{2009}): \emph{\bibinfo{title}{Handbook of Weighted
  Automata}}.
\newblock \bibinfo{series}{Monographs in Theoretical Computer Science. An EATCS
  Series}, \bibinfo{publisher}{Springer}, \doi{10.1007/978-3-642-01492-5}.

\bibitemdeclare{incollection}{drokus21}
\bibitem{drokus21}
\bibinfo{author}{Manfred \surnamestart Droste\surnameend} \&
  \bibinfo{author}{Dietrich \surnamestart Kuske\surnameend}
  (\bibinfo{year}{2021}): \emph{\bibinfo{title}{Weighted Automata}}.
\newblock In \bibinfo{editor}{Jean-Eric \surnamestart Pin\surnameend}, editor:
  {\slshape \bibinfo{booktitle}{Handbook of Automata Theory}},
  chapter~\bibinfo{chapter}{4}, \bibinfo{volume}{1}, \bibinfo{publisher}{EMS
  Press}, pp. \bibinfo{pages}{113--150}, \doi{10.4171/automata-1/4}.

\bibitemdeclare{book}{gol99}
\bibitem{gol99}
\bibinfo{author}{Jonathan~S. \surnamestart Golan\surnameend}
  (\bibinfo{year}{1999}): \emph{\bibinfo{title}{Semirings and their
  Applications}}.
\newblock \bibinfo{publisher}{Kluwer Academic}, \bibinfo{address}{Dordrecht},
  \doi{10.1007/978-94-015-9333-5}.

\bibitemdeclare{book}{hebwei98}
\bibitem{hebwei98}
\bibinfo{author}{Udo \surnamestart Hebisch\surnameend} \&
  \bibinfo{author}{Hanns~J. \surnamestart Weinert\surnameend}
  (\bibinfo{year}{1998}): \emph{\bibinfo{title}{Semirings---Algebraic Theory
  and Applications in Computer Science}}.
\newblock {\slshape \bibinfo{series}{Series in Algebra}}~\bibinfo{volume}{5},
  \bibinfo{publisher}{World Scientific}, \doi{10.1142/3903}.

\bibitemdeclare{article}{JAC80}
\bibitem{JAC80}
\bibinfo{author}{Gerard \surnamestart Jacob\surnameend} (\bibinfo{year}{1980}):
  \emph{\bibinfo{title}{Un th{\'e}or{\`e}me de factorisation des produits
  d'endomorphismes de $K^n$}}.
\newblock {\slshape \bibinfo{journal}{Journal of Algebra}}
  \bibinfo{volume}{63}(\bibinfo{number}{2}), pp. \bibinfo{pages}{389--412},
  \doi{10.1016/0021-8693(80)90080-0}.

\bibitemdeclare{article}{kir11}
\bibitem{kir11}
\bibinfo{author}{Daniel \surnamestart Kirsten\surnameend}
  (\bibinfo{year}{2011}): \emph{\bibinfo{title}{The Support of a Recognizable
  Series over a Zero-sum Free, Commutative Semiring is Recognizable}}.
\newblock {\slshape \bibinfo{journal}{Acta Cybernetica}}
  \bibinfo{volume}{20}(\bibinfo{number}{2}), pp. \bibinfo{pages}{211--221},
  \doi{10.14232/actacyb.20.2.2011.1}.

\bibitemdeclare{incollection}{knimay09}
\bibitem{knimay09}
\bibinfo{author}{Kevin \surnamestart Knight\surnameend} \&
  \bibinfo{author}{Jonathan \surnamestart May\surnameend}
  (\bibinfo{year}{2009}): \emph{\bibinfo{title}{Applications of Weighted
  Automata in Natural Language Processing}}.
\newblock In {\slshape \bibinfo{booktitle}{Handbook of Weighted Automata}}
  \cite{droste2009handbook}, chapter~\bibinfo{chapter}{14}, pp.
  \bibinfo{pages}{571--596}, \doi{10.1007/978-3-642-01492-5_14}.

\bibitemdeclare{book}{kuisal86}
\bibitem{kuisal86}
\bibinfo{author}{Werner \surnamestart Kuich\surnameend} \&
  \bibinfo{author}{Arto \surnamestart Salomaa\surnameend}
  (\bibinfo{year}{1986}): \emph{\bibinfo{title}{Semirings, Automata,
  Languages}}.
\newblock \bibinfo{series}{Monographs in Theoretical Computer Science. An EATCS
  Series}, \bibinfo{publisher}{Springer}, \doi{10.1007/978-3-642-69959-7}.

\bibitemdeclare{booklet}{MILNE}
\bibitem{MILNE}
\bibinfo{author}{James \surnamestart Milne\surnameend} (\bibinfo{year}{2020}):
  \emph{\bibinfo{title}{A Primer of Commutative Algebra}}.
\newblock
  \bibinfo{howpublished}{\url{https://www.jmilne.org/math/xnotes/CA.pdf}}.

\bibitemdeclare{inproceedings}{icc}
\bibitem{icc}
\bibinfo{author}{Antonio \surnamestart Munoz\surnameend},
  \bibinfo{author}{Sakir \surnamestart Sezer\surnameend},
  \bibinfo{author}{Dwayne \surnamestart Burns\surnameend} \&
  \bibinfo{author}{Gareth \surnamestart Douglas\surnameend}
  (\bibinfo{year}{2011}): \emph{\bibinfo{title}{An Approach for Unifying Rule
  Based Deep Packet Inspection}}.
\newblock In: {\slshape \bibinfo{booktitle}{Proc.\@ IEEE Int.\@ Conf.\@ on
  Communications}}, \bibinfo{publisher}{IEEE}, pp. \bibinfo{pages}{1--5},
  \doi{10.1109/icc.2011.5963095}.

\bibitemdeclare{article}{PUMP}
\bibitem{PUMP}
\bibinfo{author}{Micheal~O. \surnamestart Rabin\surnameend} \&
  \bibinfo{author}{Dana \surnamestart Scott\surnameend} (\bibinfo{year}{1959}):
  \emph{\bibinfo{title}{Finite Automata and Their Decision Problems}}.
\newblock {\slshape \bibinfo{journal}{IBM Journal of Research and Development}}
  \bibinfo{volume}{3}(\bibinfo{number}{2}), pp. \bibinfo{pages}{114--125},
  \doi{10.1147/rd.32.0114}.

\bibitemdeclare{article}{REUT80}
\bibitem{REUT80}
\bibinfo{author}{Christophe \surnamestart Reutenauer\surnameend}
  (\bibinfo{year}{1980}): \emph{\bibinfo{title}{An Ogden-like Iteration Lemma
  for Rational Power Series}}.
\newblock {\slshape \bibinfo{journal}{Acta Informatica}}
  \bibinfo{volume}{13}(\bibinfo{number}{2}), pp. \bibinfo{pages}{189--197},
  \doi{10.1007/bf00263993}.

\bibitemdeclare{incollection}{sak09}
\bibitem{sak09}
\bibinfo{author}{Jacques \surnamestart Sakarovitch\surnameend}
  (\bibinfo{year}{2009}): \emph{\bibinfo{title}{Rational and Recognisable Power
  Series}}.
\newblock In {\slshape \bibinfo{booktitle}{Handbook of Weighted Automata}}
  \cite{droste2009handbook}, chapter~\bibinfo{chapter}{4}, pp.
  \bibinfo{pages}{105--174}, \doi{10.1007/978-3-642-01492-5_4}.

\bibitemdeclare{book}{salsoi78}
\bibitem{salsoi78}
\bibinfo{author}{Arto \surnamestart Salomaa\surnameend} \&
  \bibinfo{author}{Matti \surnamestart Soittola\surnameend}
  (\bibinfo{year}{1978}): \emph{\bibinfo{title}{Automata-Theoretic Aspects of
  Formal Power Series}}.
\newblock \bibinfo{series}{Texts and Monographs in Computer Science},
  \bibinfo{publisher}{Springer}, \doi{10.1007/978-1-4612-6264-0}.

\bibitemdeclare{article}{SCHUTZ61}
\bibitem{SCHUTZ61}
\bibinfo{author}{Marcel-Paul \surnamestart Sch{\"u}tzenberger\surnameend}
  (\bibinfo{year}{1961}): \emph{\bibinfo{title}{On a Special Class of Recurrent
  Events}}.
\newblock {\slshape \bibinfo{journal}{The Annals of Mathematical Statistics}}
  \bibinfo{volume}{32}(\bibinfo{number}{4}), pp. \bibinfo{pages}{1201--1213},
  \doi{10.1214/aoms/1177704860}.

\bibitemdeclare{article}{sch61}
\bibitem{sch61}
\bibinfo{author}{Marcel-Paul \surnamestart Sch\"utzenberger\surnameend}
  (\bibinfo{year}{1961}): \emph{\bibinfo{title}{On the Definition of a Family
  of Automata}}.
\newblock {\slshape \bibinfo{journal}{Information and Control}}
  \bibinfo{volume}{4}(\bibinfo{number}{2--3}), pp. \bibinfo{pages}{245--270},
  \doi{10.1016/S0019-9958(61)80020-X}.

\bibitemdeclare{article}{loojon21}
\bibitem{loojon21}
\bibinfo{author}{Mark P.~J. \surnamestart {van der Loo}\surnameend} \&
  \bibinfo{author}{Edwin \surnamestart {de Jonge}\surnameend}
  (\bibinfo{year}{2021}): \emph{\bibinfo{title}{Data Validation Infrastructure
  for {R}}}.
\newblock {\slshape \bibinfo{journal}{Journal of Statistical Software}}
  \bibinfo{volume}{97}(\bibinfo{number}{10}), pp. \bibinfo{pages}{1--31},
  \doi{10.18637/jss.v097.i10}.

\bibitemdeclare{book}{bio}
\bibitem{bio}
\bibinfo{author}{Gunnar \surnamestart {von Heijne}\surnameend}
  (\bibinfo{year}{1987}): \emph{\bibinfo{title}{Sequence Analysis in Molecular
  Biology}}.
\newblock \bibinfo{publisher}{Academic Press},
  \doi{10.1016/B978-0-12-725130-1.X5001-2}.

\bibitemdeclare{article}{wan97}
\bibitem{wan97}
\bibinfo{author}{Huaxiong \surnamestart Wang\surnameend}
  (\bibinfo{year}{1997}): \emph{\bibinfo{title}{On Characters of Semirings}}.
\newblock {\slshape \bibinfo{journal}{Houston Journal of Mathematics}}
  \bibinfo{volume}{23}(\bibinfo{number}{3}), pp. \bibinfo{pages}{391--405}.

\bibitemdeclare{incollection}{yu97}
\bibitem{yu97}
\bibinfo{author}{Sheng \surnamestart Yu\surnameend} (\bibinfo{year}{1997}):
  \emph{\bibinfo{title}{Regular Languages}}.
\newblock In \bibinfo{editor}{Grzegorz \surnamestart Rozenberg\surnameend} \&
  \bibinfo{editor}{Arto \surnamestart Salomaa\surnameend}, editors: {\slshape
  \bibinfo{booktitle}{Handbook of Formal Languages}},
  chapter~\bibinfo{chapter}{2}, \bibinfo{volume}{1},
  \bibinfo{publisher}{Springer}, pp. \bibinfo{pages}{41--110},
  \doi{10.1007/978-3-642-59136-5_2}.

\end{thebibliography}

\end{document}